\newtheorem{thm}{Theorem}
\newtheorem{lemma}{Lemma}
\begin{document}
%
% paper title
% Titles are generally capitalized except for words such as a, an, and, as,
% at, but, by, for, in, nor, of, on, or, the, to and up, which are usually
% not capitalized unless they are the first or last word of the title.
% Linebreaks \\ can be used within to get better formatting as desired.
% Do not put math or special symbols in the title.
\title{Towards Dual-functional Radar-Communication Systems: Optimal Waveform Design}
%
%
% author names and IEEE memberships
% note positions of commas and nonbreaking spaces ( ~ ) LaTeX will not break
% a structure at a ~ so this keeps an author's name from being broken across
% two lines.
% use \thanks{} to gain access to the first footnote area
% a separate \thanks must be used for each paragraph as LaTeX2e's \thanks
% was not built to handle multiple paragraphs
%

\author{Fan Liu,~\IEEEmembership{Student~Member,~IEEE,}
        Longfei Zhou,~\IEEEmembership{Student~Member,~IEEE,}
        Christos Masouros,~\IEEEmembership{Senior~Member,~IEEE,}
        Ang Li,~\IEEEmembership{Student~Member,~IEEE,}
        Wu Luo,~\IEEEmembership{Member,~IEEE}
        and~Athina Petropulu,~\IEEEmembership{Fellow,~IEEE}% <-this % stops a space
\thanks{F. Liu is with the School of Information and Electronics, Beijing Institute of Technology, Beijing, 100081, China, and is also with the Department of Electronic and Electrical Engineering, University College London, London, WC1E 7JE, UK (e-mail: liufan92@bit.edu.cn).}
\thanks{L. Zhou and W. Luo are with the State Key Laboratory of Advanced Optical Communication Systems and Networks, Peking University, Beijing 100871, China (e-mail: zhoulongfei@pku.edu.cn, luow@pku.edu.cn).}
\thanks{C. Masouros and A. Li are with the Department of Electronic and Electrical Engineering, University College London, London, WC1E 7JE, UK (e-mail: chris.masouros@ieee.org, ang.li.14@ucl.ac.uk).}
\thanks{A. P. Petropulu is with the Department of Electrical and Computer Engineering, Rutgers, the State University of New Jersey, 94 Brett Road, Piscataway, NJ 08854, United States (e-mail: athinap@rutgers.edu).}
\thanks{This work has been submitted to the IEEE for possible publication. Copyright may be transferred without notice, after which this version may no longer be accessible.}
}

\maketitle

% As a general rule, do not put math, special symbols or citations
% in the abstract or keywords.
\begin{abstract}
We focus on a dual-functional multi-input-multi-output (MIMO) radar-communication (RadCom) system, where a single transmitter communicates with downlink cellular users and detects radar targets simultaneously. Several design criteria are considered for minimizing the downlink multi-user interference. First, we consider both the omnidirectional and directional beampattern design problems, where the closed-form globally optimal solutions are obtained. Based on these waveforms, we further consider a weighted optimization to enable a flexible trade-off between radar and communications performance and introduce a low-complexity algorithm. The computational costs of the above three designs are shown to be similar to the conventional zero-forcing (ZF) precoding. Moreover, to address the more practical constant modulus waveform design problem, we propose a branch-and-bound algorithm that obtains a globally optimal solution and derive its worst-case complexity as a function of the maximum iteration number. Finally, we assess the effectiveness of the proposed waveform design approaches by numerical results.
\end{abstract}

% Note that keywords are not normally used for peerreview papers.
\begin{IEEEkeywords}
Spectrum sharing, radar-communication, multi-user interference, non-convex optimization, global minimizer.
\end{IEEEkeywords}

% For peer review papers, you can put extra information on the cover
% page as needed:
% \ifCLASSOPTIONpeerreview
% \begin{center} \bfseries EDICS Category: 3-BBND \end{center}
% \fi
%
% For peerreview papers, this IEEEtran command inserts a page break and
% creates the second title. It will be ignored for other modes.
\IEEEpeerreviewmaketitle

\section{Introduction}
% The very first letter is a 2 line initial drop letter followed
% by the rest of the first word in caps.
%
% form to use if the first word consists of a single letter:
% \IEEEPARstart{A}{demo} file is ....
%
% form to use if you need the single drop letter followed by
% normal text (unknown if ever used by the IEEE):
% \IEEEPARstart{A}{}demo file is ....
%
% Some journals put the first two words in caps:
% \IEEEPARstart{T}{his demo} file is ....
%
% Here we have the typical use of a "T" for an initial drop letter
% and "HIS" in caps to complete the first word.
\IEEEPARstart{I}{t} has been reported that by 2020, the number of connected devices will jump to more than 20 million, which brings forward impending needs for extra frequency spectrum resources. Realizing the scarcity of the spectrum, network providers and policy regulators are exploring the feasibility to use the spectrum that is currently occupied by other applications \cite{federal2010connecting,ears,ssparc,specees}, such as airborne radars and navigation systems close to the 3.4GHz band\cite{ofcom2015} and shipborne and Vessel Traffic Service (VTS) radar at 5.6GHz\cite{euro2016}, which may be shared with LTE and Wi-Fi systems in the near future. As an emerging research topic, the communications-radar spectrum sharing (CRSS) not only presents the advantage for enabling the efficient usage of the spectrum, but also provides a new way for designing novel systems that can benefit from the cooperation of radar and communications.
\\\indent As a straightforward way to achieve the spectral coexistence for communication and radar, opportunistic spectrum sharing \cite{6331681} provides a naive approach, where the communication system transmits when the space and frequency spectra are not occupied by radar. Nevertheless, it does not allow the two systems to work simultaneously. In view of this, the pioneering work \cite{6503914} proposes a null-space projection (NSP) method, which has been widely applied to different spectral coexistence scenarios between MIMO radar and communication systems\cite{7089157,7814210}. In such schemes, a radar beamformer is designed to project the signals onto the null-space of the interference channel between the radar and base station (BS)/user equipment (UE), such that the interference from the radar to the communication link is zero. This, however, results in performance loss for the radar, since the beamforming is no longer optimal for target detection and estimation. Trade-offs between the performance of radar and communications can be achieved by relaxing the zero-forcing precoder to impose controllable interference levels on the communication systems \cite{6831613}, which offers a more realistic coexistence.
\\\indent More recent contributions have exploited optimization techniques to realize CRSS. In \cite{7485158}, the radar beamformer and communication covariance matrix are jointly designed to maximize the Signal-to-Interference-plus-Noise-Ratio (SINR) of the radar subject to capacity and power constraints at the communication's side. Similar work has been done for the coexistence between the MIMO-matrix completion (MIMO-MC) radar and point-to-point (P2P) MIMO communications \cite{7470514,7953658}, where the radar sub-sampling matrix is further introduced as an optimization variable. To address the more practical coexistence issue between MIMO radar and multi-user MIMO (MU-MIMO) communications, recent work in \cite{7898445} considers the robust beamforming designs with imperfect channel state information (CSI) at the communication's side, where the detection probability of the radar is maximized subject to SINR constraints of the downlink users and the power budget of the BS. As a further development of the technique, a novel beamforming design has been proposed in \cite{liu2017interference} that exploits the interference as a useful power source, which demonstrates orders-of-magnitude  power-savings. While the above coexistence approaches are well-designed, a critical shortfall is that radar and communication devices are required to exchange side-information for achieving a beneficial cooperation, such as the CSI, radar probing waveforms and communication modulation formats. Typically, these exchanges are realized by an all-in-one control center that is connected to both systems via either a wireless link or a backhaul channel \cite{7953658}, which conducts the coordination of the cooperation. In practical scenarios, however, such a control center brings forward considerable extra complexity to the system, and is therefore difficult to implement.
\\\indent In contrast to the above coexistence schemes, a more favorable approach for CRSS is to design a novel dual-functional system that carries out both radar and communications, where the above problem does not exist. Note that such methods are distinctly different from the classic cognitive radio based techniques, as they require the use of specific radar constraints and designs. Recent information theoretical work has shown great potential\cite{7279172,7855671}, but it remains to be seen what benefits can be implemented in practice. As an enabling solution, dual-functional waveform design can support target detection while carrying information at the same time. Such possibilities have been explored for single-antenna systems, where several integrated waveforms that combine the radar and communication signals have been proposed \cite{4268440,5776640,7409935}. Nevertheless, all of these schemes lead to performance loss for either radar or communication, e.g., high peak-average-power-ratio (PAPR) and limited dynamic range \cite{5776640}. As a step further, recent works consider dual-functional waveform design for MIMO systems. In \cite{7347464}, a transmit beampattern for MIMO radar is designed to embed the information bits in sidelobe levels. Related works consider waveform shuffling across the antennas or Phase Shift Keying (PSK) by different beamformer weighting factors as the communication modulation schemes \cite{7575457,7485316}. It should be noted that in the above approaches, one communication symbol is represented by one or several radar pulses, which leads to a low date rate in the order of the radar pulse repetition frequency (PRF). To support multi-user transmission for the cellular downlink, the previous work \cite{liu2017mu} develops a series of beamforming approaches for dual-functional RadCom systems, which will not affect the original modulation scheme and the data rate of the communication system. Nevertheless, the beamforming approaches only focus on the average power constraints, and do not address the design of the constant modulus signals.
\\\indent As an important requirement for both radar and communication applications, the utilization of constant modulus waveforms can avoid signal distortion when the low-cost non-linear power amplifiers are used \cite{5978417}, which leads to an energy-efficient transmission. Such topics have been widely studied for massive MIMO communication scenarios \cite{6451071,7811286,7738555,8023970} as well as the MIMO radar waveform designs \cite{4668417,6649991,7450660,7840054}, where optimization problems with non-convex constant modulus constraint (CMC) are formulated. Due to the NP-hardness of these problems, only suboptimal solutions can be obtained via either convex relaxation methods or local algorithms, such as Semidefinite Relaxation (SDR) \cite{4668417,6649991} and Riemannian manifold methods \cite{7811286,8023970}. Recent MIMO radar work proposes to approach the constant modulus solution by a successive Quadratic Constrained Quadractic Programming (QCQP) Refinement (SQR) procedure \cite{7450660}. Nevertheless, this technique still only guarantees the local optimality of the obtained solution. To the best of our knowledge, the efficient global algorithm for constant modulus waveform design is widely unexplored in the existing literature.
\\\indent In this paper, we propose several optimization-based waveform designs for the dual-functional RadCom systems. It is worth highlighting that all the proposed methods yield probably globally optimal waveforms, which can be used for both target detection and downlink communications. Throughout the paper, we aim to minimize the downlink multi-user interference (MUI) under radar-specific constraints. First, we consider an orthogonal waveform design, which is often used for the initial omnidirectional probing by MIMO radar. Based on this waveform, we extend to the design of a directional radar beampattern that points to the targets of interest. The aforementioned two optimization problems are non-convex, but the optimal solutions can be readily obtained in closed-forms. Still, the obtained performance for the communication system is limited. To allow a trade-off between radar and communication performance, we consider a weighted optimization under non-convex power budget constraint, and obtain its global optimum via a well-designed low-complexity algorithm. Given that both radar and communication systems require constant modulus signals for power-efficient transmission, we finally consider a more practical optimization by enforcing constant modulus constraints and similarity constraints (SC) on the waveform design. In contrast to the existing approaches in both radar and communication works that obtain the local minimizers of problems with CMC \cite{6451071,7811286,7738555,8023970,4668417,6649991,7450660,7840054}, we propose a branch-and-bound method that can efficiently yield a globally optimal solution for the problem. Our numerical results show that the proposed branch-and-bound algorithm considerably outperforms the conventional SQR method\cite{7450660}. For clarity, we summarize our contributions as follows:
\begin{itemize}
    \item We propose dual-functional waveform design approaches for both omnidirectional and directional radar beampatterns, and derive the closed-form solutions.
    \item We propose a weighted optimization that achieves a flexible trade-off between the radar and communication performance, and solve the problem with a low-complexity global algorithm.
    \item We consider the waveform design with CMC and SC constraints, and develop a branch-and-bound algorithm to obtain the globally optimal solutions, which outperforms the conventional SQR algorithm.
    \item We derive the computational complexity for the proposed algorithms analytically.
\end{itemize}

The remainder of this paper is organized as follows, Section II introduces the system model, Section III proposes the closed-form waveform optimizations for radar beampattern design, Section IV considers the trade-off design between radar and communications, Section V solves the problem with CMC and SC constraints, Section VI provides numerical results, and finally Section VII concludes the paper.
\\\indent {\emph{Notations}}: Unless otherwise specified, matrices are denoted by bold uppercase letters (i.e., $\mathbf{H}$), vectors are represented by bold lowercase letters (i.e., $\mathbf{x}$), and scalars are denoted by normal font (i.e., $\rho$). Subscripts indicate the location of the entry in the matrices or vectors (i.e., $s_{i,j}$ and $l_n$ are the $(i,j)$-th and the \emph{n}-th element in $\mathbf{S}$ and $\mathbf{l}$, respectively). $\operatorname{tr}\left(\cdot\right)$, $\left(\cdot\right)^T$, $\left(\cdot\right)^H$ and $\left(\cdot\right)^*$ stand for trace, transpose, Hermitian transpose and complex conjugate, respectively. $\operatorname{Re}\left(\cdot\right)$ and $\operatorname{Im}\left(\cdot\right)$ denote the real and imaginary part of the argument, $\left\| \cdot\right\|$,  $\left\| \cdot\right\|_{\infty}$ and $\left\| \cdot\right\|_F$ denote the $l_2$ norm, $l_{\infty}$ and the Frobenius norm respectively.

% You must have at least 2 lines in the paragraph with the drop letter
% (should never be an issue)

\begin{figure}
    \centering
    \includegraphics[width=0.8\columnwidth]{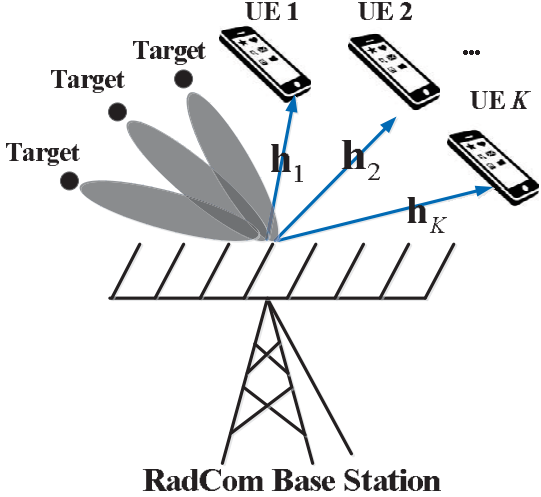}
    \caption{Dual-functional Radar-Communication System.}
    \label{fig:1}
\end{figure}

\section{System Model}
We consider a dual-functional MIMO RadCom system, which simultaneously transmits radar probing waveforms to the targets and communication symbols to the downlink users. The joint system is equipped with a uniform linear array (ULA) with \emph{N} antennas, serving \emph{K} single-antenna users while detecting radar targets at the same time.
\subsection{Communication Model}
The received symbol matrix at the downlink users can be given as
\begin{equation}
    {{\mathbf{Y}}} = {\mathbf{H}}{\mathbf{X}} + {\mathbf{W}},
\end{equation}
where $\mathbf{H} = {\left[ {{{\mathbf{h}}_1},{{\mathbf{h}}_2},...,{{\mathbf{h}}_K}} \right]^T}\in {\mathbb{C}^{K \times N}}$ is the channel matrix, ${\mathbf{X}} = \left[ {{{\mathbf{x}}_1},{{\mathbf{x}}_2},...,{{\mathbf{x}}_L}} \right] \in {\mathbb{C}^{N \times L}}$ is the transmitted signal matrix, with $L$ being the length of the communication frame, ${\mathbf{W}} = \left[ {{{\mathbf{w}}_1},{{\mathbf{w}}_2},...,{{\mathbf{w}}_L}} \right] \in {\mathbb{C}^{K \times L}}$ is the noise matrix, with ${{\mathbf{w}}_j}\sim\mathcal{C}\mathcal{N}\left( {0,{N_0}{\mathbf{I}}_N} \right),\forall j$.
\\\indent Following \cite{liu2017mu}, we rely on the following assumptions: 1) The transmitted signal matrix ${\mathbf{X}}$ is used as dual-functional waveform for both radar and communication operations. In this case, each communication symbol is also a snapshot of a radar pulse; 2) The downlink channel $\mathbf{H}$ is flat Rayleigh fading, and remains unchanged during one communication frame/radar pulse; 3) The channel $\mathbf{H}$ is assumed to be perfectly estimated by pilot symbols.
\\\indent Given the desired constellation symbol matrix ${\mathbf{S}}\in {\mathbb{C}^{K \times L}}$ for the downlink users, the received signals can be rewritten as
\begin{equation}
    {{\mathbf{Y}}} = {\mathbf{S}} + \underbrace {\left( {{\mathbf{HX}} - {\mathbf{S}}} \right)}_{{\text{MUI}}} + {\mathbf{W}},
\end{equation}
For each user, the entry of ${\mathbf{S}}$ is assumed to be drawn from the same constellation. The second term in (2) represents the MUI signals. The total MUI energy can be measured as
\begin{equation}
    {P_{{\text{MUI}}}} = \left\| {{\mathbf{HX}} - {\mathbf{S}}} \right\|_F^2.
\end{equation}
It has been proven in \cite{6451071} that the MUI energy in (3) directly links to the achievable sum-rate of the downlink users. For the \emph{i}-th user, the SINR per frame is given as \cite{6451071}
\begin{equation}
   {\gamma _i} = \frac{{\mathbb{E}\left( {{{\left| {{s_{i,j}}} \right|}^2}} \right)}}{{\underbrace {\mathbb{E}\left( {{{\left| {{\mathbf{h}}_i^T{{\mathbf{x}}_j} - {s_{i,j}}} \right|}^2}} \right)}_{{\text{MUI energy}}} + {N_0}}},
\end{equation}
where ${s_{i,j}}$ is the $\left(i,j\right)$-th entry of $\mathbf{S}$, $\mathbb{E}$ denotes the ensemble average with respect to the time index. It follows that the achievable sum-rate of the users can be given as
\begin{equation}
    R = \sum\limits_{i = 1}^K {{{\log }_2}\left( {1 + {\gamma _i}} \right)}.
\end{equation}
For a given constellation with fixed energy, the power of the useful signal ${\mathbb{E}\left( {{{\left| {{s_{i,j}}} \right|}^2}} \right)}$ is also fixed. Hence, the sum-rate can be maximized by minimizing the MUI energy.
\subsection{Radar Model}
It is widely known that by employing uncorrelated waveforms, MIMO radar achieves higher Degrees of Freedom (DoFs) than the traditional phased-array radar \cite{li2009mimo,4350230}. The existing literature indicates that the design of such a beampattern is equivalent to designing the covariance matrix of the probing signals, where convex optimization can be employed. We refer readers to \cite{4350230,4276989,4516997} for more details on this topic. Here we focus on designing the dual-functional waveform matrix ${\mathbf{X}}$, which has the following spatial covariance matrix
\begin{equation}
    {{\mathbf{R}}_X} = \frac{1}{L}{\mathbf{X}}{{\mathbf{X}}^H}.
\end{equation}
To ensure that ${{\mathbf{R}}_X}$ is positive-definite, we assume $L \ge N$ without loss of generality. Further, the transmit beampattern for the RadCom system can be given as
\begin{equation}
    {P_d}\left( \theta  \right) = {{\mathbf{a}}^H}\left( \theta  \right){{\mathbf{R}}_X}{\mathbf{a}}\left( \theta  \right),
\end{equation}
where $\theta$ denotes the detection angle,  ${\mathbf{a}}\left( \theta \right) = \left[ {1,{e^{j2\pi \Delta \sin \left( \theta  \right)}},...,{e^{j2\pi \left( {N - 1} \right)\Delta \sin \left( \theta  \right)}}} \right]^T \in {\mathbb{C}^{N \times 1}}$ is the steering vector of the transmit antenna array with $\Delta$ being the spacing between adjacent antennas normalized by the wavelength.
\\\indent In the following, we formulate optimization problems that minimize ${P_{{\text{MUI}}}}$ under MIMO radar-specific constraints.
% Meanwhile, the performance of the radar functions should also be guaranteed.

\section{Closed-form Waveform Design for Given Radar Beampatterns}
In this section, we first consider the omnidirectional beampattern design, which is usually used in MIMO radar for initial probing. After that, we consider a directional beampattern design that points to the directions of interest.
\subsection{Omnidirectional Beampattern Design}
For an omnidirectional beampattern, the transmitted waveform matrix ${\mathbf{X}}$ has to be orthogonal, i.e., the corresponding covariance matrix must be the identity matrix. To minimize the MUI energy, the optimization problem is formulated as
\begin{equation}
\begin{gathered}
  \mathop {\min }\limits_{\mathbf{X}} \left\| {{\mathbf{HX}} - {\mathbf{S}}} \right\|_F^2 \hfill \\
  s.t.\;\;\frac{1}{L}{\mathbf{X}}{{\mathbf{X}}^H} = \frac{{{P_T}}}{N}{\mathbf{I}}_N, \hfill \\
\end{gathered}
\end{equation}
where $P_T$ is the total transmit power, ${\mathbf{I}}_N$ denotes the $N\times N$ identity matrix. Problem (8) is obviously non-convex due to the equality constraint, which indicates that ${\mathbf{X}}$ is a point on the Stiefel manifold. Fortunately, it has been proven that (8) can be classified as the so-called Orthogonal Procrustes problem (OPP), which has a simple closed-form global solution based on the Singular Value Decomposition (SVD), and is given as \cite{viklands2006algorithms}
\begin{equation}
    {\mathbf{X}} = \sqrt {\frac{{L{P_T}}}{N}} {\mathbf{U}}{{\mathbf{I}}_{N \times L}}{{\mathbf{V}}^H},
\end{equation}
where ${\mathbf{U\Sigma }}{{\mathbf{V}}^H} = {{\mathbf{H}}^H}{\mathbf{S}}$ is the SVD of ${{\mathbf{H}}^H}{\mathbf{S}}$ with $\mathbf{U}\in {\mathbb{C}^{N \times N}}$ and $\mathbf{V} \in {\mathbb{C}^{L \times L}}$ being the unitary matrices, ${\mathbf{I}}_{N \times L}$ is an $N\times L$ rectangular matrix composed by an $N\times N$ identity matrix and an $N\times\left(L-N\right)$ zero matrix.
\subsection{Directional Beampattern Design}
Given a covariance matrix ${{{\mathbf{R}}_d}}$ that corresponds to a well-designed MIMO radar beampattern, the MUI minimization problem is given as
\begin{equation}
\begin{gathered}
  \mathop {\min }\limits_{\mathbf{X}} \left\| {{\mathbf{HX}} - {\mathbf{S}}} \right\|_F^2 \hfill \\
  s.t.\;\;\frac{1}{L}{\mathbf{X}}{{\mathbf{X}}^H} = {\mathbf{R}}_d, \hfill \\
\end{gathered}
\end{equation}
where ${\mathbf{R}}_d$ is the desired Hermitian positive semidefinite covariance matrix, and $\operatorname{tr} \left( {{{\mathbf{R}}_d}} \right) = {P_T}$. We consider its Cholesky decomposition, which is
\begin{equation}
    {{\mathbf{R}}_d} = {\mathbf{F}}{{\mathbf{F}}^H},
\end{equation}
where ${\mathbf{F}}\in {\mathbb{C}^{N \times N}}$ is a lower triangular matrix. Without loss of generality, we assume ${{\mathbf{R}}_d}$ is positive-definite to ensure that ${\mathbf{F}}$ is invertible. Hence, the constraint in (10) can be equivalently written as
\begin{equation}
    \frac{1}{L}{{\mathbf{F}}^{ - 1}}{\mathbf{X}}{{\mathbf{X}}^H}{{\mathbf{F}}^{ - H}} = {\mathbf{I}}_N.
\end{equation}
Denoting ${\mathbf{\tilde X}} = \sqrt {\frac{1}{L}} {{\mathbf{F}}^{ - 1}}{\mathbf{X}}$, problem (10) can be reformulated as
\begin{equation}
\begin{gathered}
  \mathop {\min }\limits_{{\mathbf{\tilde X}}} \left\| {\sqrt L {\mathbf{HF\tilde X}} - {\mathbf{S}}} \right\|_F^2 \hfill \\
  s.t.\;\;{\mathbf{\tilde X}}{{{\mathbf{\tilde X}}}^H} = {\mathbf{I}}_N, \hfill \\
\end{gathered}
\end{equation}
which is again an OPP problem, and its globally optimal solution is given by
\begin{equation}
    {\mathbf{\tilde X}} = {\mathbf{\tilde U}}{{\mathbf{I}}_{N \times L}}{{{\mathbf{\tilde V}}}^H},
\end{equation}
where ${\mathbf{\tilde U\tilde \Sigma }}{{{\mathbf{\tilde V}}}^H} = {{\mathbf{F}}^H}{{\mathbf{H}}^H}{\mathbf{S}}$ is the SVD of ${{\mathbf{F}}^H}{{\mathbf{H}}^H}{\mathbf{S}}$. It follows that the solution of the original problem (10) is given as
\begin{equation}
    {\mathbf{X}} = \sqrt L {\mathbf{F\tilde U}}{{\mathbf{I}}_{N \times L}}{{{\mathbf{\tilde V}}}^H}.
\end{equation}
\subsection{Complexity Analysis}
The omnidirectional beampattern design includes two matrix multiplications and one SVD, which needs a total of $\mathcal{O}\left(NKL+2NL^2\right)$ complex floating-point-operations (flops), where one complex flop is defined as one complex addition or multiplication. The directional beampattern design, which needs one Cholesky decomposition, four matrix multiplications and one SVD, has the total complexity of $\mathcal{O}\left(2NL^2+N^2L+NKL+N^3+N^2K\right)$. For the conventional communication-only zero-forcing (ZF) precoding, which involves one pseudoinverse for $\mathbf{H}$, and one matrix multiplication between the precoder and the transmitted symbol matrix, the complexity is $\mathcal{O}\left(NKL+N^2K\right)$. It is worth noting that the computational costs of the proposed closed-form approaches share the same order of magnitude with that of the zero-forcing precoder.
% For both omnidirectional and specific beampattern designs, the dominated computational complexity is contributed by the SVD operation, which is known to be $\mathcal{O}\left(NL^2\right)$. In contrast, for the conventional zero-forcing precoding that involves pseudoinverse for $\mathbf{H}$ and the matrix multiplication between the precoder and the symbol matrix, the dominated complexity is $\mathcal{O}\left(NKL\right)$. It is worth noting that the computational costs of the proposed closed-form approaches share the same order-of-magnitude with that of the zero-forcing precoder.

\section{Trade-off Between Radar and Communication Performance}
It should be highlighted that both problem (8) and (10) enforce a strict equality constraint, in which case the radar performance is guaranteed to be optimal while the communication counterpart may suffer from serious performance loss. This is particularly pronounced in the cases that the covariance matrices of the communication channel are ill conditioned, where the resulting MUI minimum is still high. We therefore consider a trade-off design by allowing a tolerable mismatch between the designed and the desired radar beampatterns.
\subsection{Problem Formulation}
Let us first denote the optimal solution obtained from (8) or (10) as ${\mathbf{X}}_0$. Given ${\mathbf{X}}_0$, the trade-off problem can be then formulated as
\begin{equation}
\begin{gathered}
  \mathop {\min }\limits_{\mathbf{X}} \rho \left\| {{\mathbf{HX}} - {\mathbf{S}}} \right\|_F^2 + \left( {1 - \rho } \right)\left\| {{\mathbf{X}} - {{\mathbf{X}}_0}} \right\|_F^2 \hfill \\
  s.t.\;\;\frac{1}{L}\left\| {\mathbf{X}} \right\|_F^2 = {P_T}, \hfill \\
\end{gathered}
\end{equation}
where $0 \le \rho  \le 1$ is a weighting factor that determines the weights for radar and communication performance in the dual-functional system. For coherence between (16) and the previous two problems, we enforce an equality constraint for the power budget, as the radar is often required to transmit at its maximum available power in practice.
\\\indent Note that the two Frobenius norms in the objective function can be combined in the form
\begin{equation}
\begin{gathered}
  \rho \left\| {{\mathbf{HX}} - {\mathbf{S}}} \right\|_F^2 + \left( {1 - \rho } \right)\left\| {{\mathbf{X}} - {{\mathbf{X}}_0}} \right\|_F^2 \hfill \\
   = \left\| {{{\left[ {\sqrt \rho  {{\mathbf{H}}^T},\sqrt {1 - \rho } {\mathbf{I}}_N} \right]}^T}{\mathbf{X}} - {{\left[ {\sqrt \rho  {{\mathbf{S}}^T},\sqrt {1 - \rho } {\mathbf{X}}_0^T} \right]}^T}} \right\|_F^2. \hfill \\
\end{gathered}
\end{equation}
Let us denote ${\mathbf{A}} = {\left[ {\sqrt \rho  {{\mathbf{H}}^T},\sqrt {1 - \rho } {\mathbf{I}}_N} \right]^T\in {\mathbb{C}^{\left( K+N\right) \times N}}},{\mathbf{B}} = {\left[ {\sqrt \rho  {{\mathbf{S}}^T},\sqrt {1 - \rho } {\mathbf{X}}_0^T} \right]^T}\in {\mathbb{C}^{\left( K+N\right) \times L}}$, problem (16) can be written compactly as
\begin{equation}
\begin{gathered}
  \mathop {\min }\limits_{\mathbf{X}} \left\| {{\mathbf{AX}} - {\mathbf{B}}} \right\|_F^2 \hfill \\
  s.t.\;\;\left\| {\mathbf{X}} \right\|_F^2 = L{P_T}, \hfill \\
\end{gathered}
\end{equation}
which is a non-convex QCQP, and can be readily transformed into a Semidefinite Programming (SDP) using SDR technique. Since it has only one quadratic constraint, according to \cite{fradkov1979thes,stern1995indefinite}, the SDR is tight, i.e., the solution of the SDR is rank-1, which yields the globally optimal solution of (18). Nevertheless, due to the large number of variables in the problem, SDR is not computationally efficient in general. Hence, we propose a low-complexity algorithm that achieves the global optimum in the following.
\subsection{Low-complexity Algorithm}
We further expand the objective function of (18) as
\begin{equation}
\begin{gathered}
  \left\| {{\mathbf{AX}} - {\mathbf{B}}} \right\|_F^2 = \operatorname{tr} \left( {{{\left( {{\mathbf{AX}} - {\mathbf{B}}} \right)}^H}\left( {{\mathbf{AX}} - {\mathbf{B}}} \right)} \right) \hfill \\
   = \operatorname{tr} \left( {{{\mathbf{X}}^H}{{\mathbf{A}}^H}{\mathbf{AX}}} \right) - \operatorname{tr} \left( {{{\mathbf{X}}^H}{{\mathbf{A}}^H}{\mathbf{B}}} \right) \hfill \\
  \;\;\;\;\;\;\;\;\;\;\;\;\;\;\;\;\;\;\;\;\;\;\;\; - \operatorname{tr} \left( {{{\mathbf{B}}^H}{\mathbf{AX}}} \right) + \operatorname{tr} \left( {{{\mathbf{B}}^H}{\mathbf{B}}} \right). \hfill \\
\end{gathered}
\end{equation}
Defining ${\mathbf{Q}} = {{\mathbf{A}}^H}{\mathbf{A}},{\mathbf{G}} = {{\mathbf{A}}^H}{\mathbf{B}}$, problem (18) can be rewritten as
\begin{equation}
\begin{gathered}
  \mathop {\min }\limits_{\mathbf{X}} \operatorname{tr} \left( {{{\mathbf{X}}^H}{\mathbf{QX}}} \right) - 2\operatorname{Re} \left( {\operatorname{tr} \left( {{{\mathbf{X}}^H}{\mathbf{G}}} \right)} \right) \hfill \\
  s.t.\;\;\left\| {\mathbf{X}} \right\|_F^2 = L{P_T}. \hfill \\
\end{gathered}
\end{equation}
Since $\mathbf{Q}$ is a Hermitian matrix, problem (20) can be viewed as the matrix version of the trust-region subproblem (TRS), for which the strong duality holds\cite{fortin2004trust}, i.e., the duality gap is zero. Let us formulate the Lagrangian multiplier as
\begin{equation}
\begin{gathered}
  \mathcal{L}\left( {{\mathbf{X}},\lambda } \right) = \operatorname{tr} \left( {{{\mathbf{X}}^H}{\mathbf{QX}}} \right) - 2\operatorname{Re} \left( {\operatorname{tr} \left( {{{\mathbf{X}}^H}{\mathbf{G}}} \right)} \right) \hfill \\
  \;\;\;\;\;\;\;\;\;\;\;\;\;\;\;\;\;\;\; + \lambda \left( {\left\| {\mathbf{X}} \right\|_F^2 - L{P_T}} \right), \hfill \\
\end{gathered}
\end{equation}
where $\lambda$ is the dual variable associated with the equality constraint. Let ${\mathbf{X}}_{opt}$ and ${\lambda}_{opt}$ be the primal and dual optimal points with zero duality gap, the optimality conditions for the above TRS can be given as \cite{rendl1997semidefinite}
\begin{subequations}
\begin{align}
  \nabla \mathcal{L}\left( {{{\mathbf{X}}_{opt}},{\lambda_{opt}}} \right) = 2\left( {{\mathbf{Q}} + {\lambda_{opt}}{\mathbf{I}}_N} \right){{\mathbf{X}}_{opt}} - 2{\mathbf{G}} = 0,& \hfill \\
  \left\| {{{\mathbf{X}}_{opt}}} \right\|_F^2 = L{P_T},& \hfill \\
  {\mathbf{Q}} + {\lambda _{opt}}{\mathbf{I}}_N \succeq 0,&
\end{align}
\end{subequations}
where (22b) and (22c) guarantee the primal and the dual feasibility respectively. It follows from (22a) that
\begin{equation}
    {\mathbf{X}}_{opt} = {\left( {{\mathbf{Q}} + \lambda_{opt} {\mathbf{I}}_N} \right)^\dag }{\mathbf{G}},
\end{equation}
where $\left(\cdot\right)^\dag$ denotes the Moore-Penrose pseudoinverse of the matrix. Based on (22b) and (22c) we have
\begin{equation}
\begin{gathered}
\left\| {\left( {{\mathbf{Q}} + {\lambda _{opt}}{\mathbf{I}}_N} \right)^\dag{\mathbf{G}}} \right\|_F^2 \hfill \\
\;\;\;\;\;\;\;\;\;= \left\| {{{\mathbf{V}}}{{\left( {{\mathbf{\Lambda }} + {\lambda _{opt}}{\mathbf{I}}_N} \right)}^{ - 1}}{\mathbf{V}}^H{\mathbf{G}}} \right\|_F^2 =  L{P_T}, \hfill \\
{\lambda _{opt}} \ge  - {\lambda _{\min }}.
\end{gathered}
\end{equation}
where ${\mathbf{Q}} = {\mathbf{V\Lambda }}{{\mathbf{V}}^H}$ is the eigenvalue decomposition of ${\mathbf{Q}}$ with ${\mathbf{V}}$ and ${\mathbf{\Lambda}}$ being the orthogonal and diagonal matrices that contain the eigenvectors and eigenvalues respectively, and $\lambda_{\min}$ is the minimum eigenvalue of $\mathbf{Q}$. One can further show that there exists an unique solution for the equations (24). Let us define
\begin{equation}
\begin{gathered}
  P\left( \lambda  \right) = \left\| {{\mathbf{V}}{{\left( {{\mathbf{\Lambda }} + \lambda {{\mathbf{I}}_N}} \right)}^{ - 1}}{{\mathbf{V}}^H}{\mathbf{G}}} \right\|_F^2 \\
   \;\;\;\;\;= \sum\limits_{i = 1}^N {\sum\limits_{j = 1}^L {\frac{{{{\left( {{{\left[ {{{\mathbf{V}}^H}{\mathbf{G}}} \right]}_{i,j}}} \right)}^2}}}{{{{\left( {\lambda  + {\lambda _i}} \right)}^2}}}} }, \\
\end{gathered}
\end{equation}
where $\lambda_i$ is the \emph{i}-th eigenvalue of ${\mathbf{Q}}$. It can be seen that $P\left( \lambda  \right)$ is strictly decreasing and convex on the interval ${\lambda} \ge  - {\lambda _{\min }}$, which suggests that ${\lambda _{opt}}$ can be obtained by simple line search methods, e.g., Golden-section search\cite{huyer1999global}. Thanks to the eigenvalue decomposition, in each iteration we only need to calculate the inversion of a diagonal matrix. Once the optimal $\lambda$ is obtained, the optimal solution to (16) can be computed by
(23). For clarity, we summarize the above approach in Algorithm 1.
\renewcommand{\algorithmicrequire}{\textbf{Input:}}
\renewcommand{\algorithmicensure}{\textbf{Output:}}
\begin{algorithm}
\caption{Low-complexity Algorithm for Solving (16)}
\label{alg:A}
\begin{algorithmic}
    \REQUIRE $\mathbf{H},\mathbf{S},{\mathbf{x}}_0$, weighting factor $0 \le \rho \le 1$, $P_T$,
    \ENSURE Global minimizer ${\mathbf{X}}_{opt}$
    \STATE 1. Compute $\mathbf A$, $\mathbf B$, $\mathbf Q$ and $\mathbf G$.
    \STATE 2. Compute the eigenvalue decomposition of $\mathbf Q$, set the searching interval as $\left[-\lambda_{min}, b\right]$, where $b \ge 0$ is a searching upper-bound.
    \STATE 3. Find the optimal solution $\lambda_{opt}$ to (24) using Golden-section search.
    \STATE 4. ${\mathbf{X}}_{opt} = {\left( {{\mathbf{Q}} + \lambda_{opt} {\mathbf{I}}_N} \right)^\dag }{\mathbf{G}}.$
\end{algorithmic}
\end{algorithm}
\subsection{Complexity Analysis}
We end this section by analyzing the complexity of Algorithm 1. The Golden-section search method is known to have linear convergence rate, which finds an $\varepsilon_0$-solution within $\mathcal{O}\left(\log\left(1/\varepsilon_0\right)\right)$ iterations. In each iteration we calculate the value of a 1-dimensional function, which suggests that the complexity of the Golden-section search can be omitted in general. Hence the complexity for Algorithm 1 is domainated by the matrix multiplications, the pseudoinverse and the eigenvalue decomposition. Both of the latter two operations involve the computational costs of $\mathcal{O}\left(N^3\right)$ complex flops, and the matrix multiplications involve the complexity of $\mathcal{O}\left(N^2L+NKL+N^3+N^2K\right)$. Therefore, the total complexity for Algorithm 1 is $\mathcal{O}\left(N^2L+NKL+3N^3+N^2K\right)$, which again shares the same order of magnitude with the communication-only ZF precoding. For the sake of clarity, we summarize the computational costs of the proposed three waveform design approaches in TABLE I.

% We end this section by analyzing the complexity of Algorithm 1. The Golden-section search method is known to have linear convergence rate, which needs $\mathcal{O}\left(\log\left(1/\varepsilon_0\right)\right)$ iterations \cite{huyer1999global} with $\varepsilon_0$ being the tolerance of the accuracy. In each iteration we calculate the value of a 1-dimensional function, which suggests that the complexity of the Golden-section search can be omitted in general. Hence the complexity for Algorithm 1 is domainated by the matrix multiplication, which is $\mathcal{O}\left(N^2L\right)$, and again shares the same order-of-magnitude with the ZF precoding.

\begin{table}
%% increase table row spacing, adjust to taste
\renewcommand{\arraystretch}{1.3}
% if using array.sty, it might be a good idea to tweak the value of
% \extrarowheight as needed to properly center the text within the cells
\caption{Computational Complexity for the Proposed Approaches}
\label{table_example}
\centering
%% Some packages, such as MDW tools, offer better commands for making tables
%% than the plain LaTeX2e tabular which is used here.
% \begin{footnotesize}
\begin{tabular}{lr}
\toprule
%{\diagbox{\bf Parameters}{\bf No. of Curves}} & \bf 1 (Chirp-rate only) & \bf 2 (Chirp-rate only) & \bf 3 (diag deployment) & \bf 4 (Rectangular deployment)\\
 \bf Method & \bf Complex Flops \\
\midrule
Omnidirectional Design & $\mathcal{O}\left(NKL+2NL^2\right)$  \\
Directional Design & $\mathcal{O}\left(2NL^2+N^2L+NKL+N^3+N^2K\right)$ \\
Trade-off Design & $\mathcal{O}\left(N^2L+NKL+N^3+N^2K\right)$ \\
Zero-forcing (benchmark) & $\mathcal{O}\left(NKL+N^2K\right)$ \\
\bottomrule
\end{tabular}
% \end{footnotesize}}
\end{table}

\section{Constant Modulus Waveform Design}
In the previous sections, the dual-functional RadCom waveform is designed under total power constraints, which is not guaranteed to generate constant modulus signals. In this section, we consider the RadCom waveform design that minimizes the communication MUI energy given the CMC.
\subsection{Problem Formulation}
Following the same notations in the previous section, our optimization problem can be formulated as
\begin{subequations}
\begin{align}
  &\mathop {\min }\limits_{\mathbf{X}} \left\| {{\mathbf{HX}} - {\mathbf{S}}} \right\|_F^2 \hfill \\
  &\;s.t\;\;{\left\| {\operatorname{vec} \left( {{\mathbf{X}} - {{\mathbf{X}}_0}} \right)} \right\|_\infty } \le \eta , \hfill \\
  &\;\;\;\;\;\;\;\left| {{x_{i,j}}} \right| = \sqrt {\frac{{{P_T}}}{N}}, \forall i, j,
\end{align}
\end{subequations}
where ${{\mathbf{X}}_0}\in {\mathbb{C}^{N \times L}}$ is a known benchmark radar signal matrix that has constant-modulus entries, e.g., chirp signals, $\operatorname{vec} \left( \cdot\right)$ denotes the vectorization of a matrix, and ${x_{i,j}}$ is the $\left(i,j\right)$-th entry of ${{\mathbf{X}}}$. The constraint (26b) is called similarity constraint (SC) in the radar literature\cite{6649991}, which controls the difference between the designed waveform and the benchmark with $\eta$ being the tolerable difference.
\\\indent It is trivial to see that the objective function of (26) is separable, since
\begin{equation}
    \left\| {{\mathbf{HX}} - {\mathbf{S}}} \right\|_F^2 = \sum\limits_{i = 1}^L {\left\| {{\mathbf{H}}{{\mathbf{x}}_i} - {{\mathbf{s}}_i}} \right\|^2}.
\end{equation}
Hence, it can be further simplified using the normalized vector variable, which is
\begin{equation}
\begin{gathered}
  \mathop {\min }\limits_{\mathbf{x}} \left\| {{\sqrt {\frac{{{P_T}}}{N}}\mathbf{Hx}} - {\mathbf{s}}} \right\|^2 \hfill \\
  s.t\;\;\;{\left\| {{\mathbf{x}} - {{\mathbf{x}}_0}} \right\|_\infty } \le \varepsilon , \hfill \\
  \;\;\;\;\;\;\;\left| {{x\left(n\right)}} \right| = 1 ,\forall n, \hfill \\
\end{gathered}
\end{equation}
where $\varepsilon = \eta\sqrt {\frac{{{N}}}{P_T}}$, ${\mathbf{x}}\in {\mathbb{C}^{N \times 1}}$, ${\mathbf{x}}_0\in {\mathbb{C}^{N \times 1}}$ are the columns of $\mathbf{X}$ and ${\mathbf{X}}_0$ normalized by $\sqrt {\frac{{{P_T}}}{N}}$, ${\mathbf{s}}\in {\mathbb{C}^{K \times 1}}$ is the column of $\mathbf{S}$, and $x\left(n\right)$ denotes the \emph{n}-th entry of ${\mathbf{x}}$. Since problem (26) can be solved by solving the problem (28) for each column of ${\mathbf{X}}$ concurrently, we will focus on (28) in the following discussion. For notational convenience, we omit the column index.
\\\indent Note that $0 \le \varepsilon \le 2$ since both ${\mathbf{x}}$ and ${\mathbf{x}}_0$ have unit modulus. According to \cite{6649991}, the similarity constraint can be rewritten as
\begin{equation}
    \arg {x\left(n\right)} \in \left[ {{l_n},{u_n}} \right], \forall n,
\end{equation}
where
\begin{equation}
\begin{gathered}
  {l_n} = \arg {x_0}\left( n \right) - \arccos \left( {1 - {{{\varepsilon ^2}} \mathord{\left/
 {\vphantom {{{\varepsilon ^2}} 2}} \right.
 \kern-\nulldelimiterspace} 2}} \right), \hfill \\
  {u_n} = \arg {x_0}\left( n \right) + \arccos \left( {1 - {{{\varepsilon ^2}} \mathord{\left/
 {\vphantom {{{\varepsilon ^2}} 2}} \right.
 \kern-\nulldelimiterspace} 2}} \right), \hfill \\
\end{gathered}
\end{equation}
which leads to the following equivalent formulation of the problem
\begin{equation}
\begin{gathered}
  \mathop {\min }\limits_{\mathbf{x}} f\left({\mathbf{x}}\right) = \left\| {{\mathbf{\tilde H}}\mathbf{x} - {\mathbf{s}}} \right\|^2 \hfill \\
  s.t\;\;\;\arg {x\left(n\right)} \in \left[ {{l_n},{u_n}} \right], \forall n, \hfill \\
  \;\;\;\;\;\;\;\left| {{x\left(n\right)}} \right| = 1 ,\forall n, \hfill \\
\end{gathered}
\end{equation}
where ${\mathbf{\tilde H}} = \sqrt {\frac{{{P_T}}}{N}} {\mathbf{H}}$. For each $x\left(n\right)$, the feasible region is an arc on the unit circle as shown in Fig. 2, which makes the problem non-convex, and NP-hard in general. In the following, we consider a global optimization algorithm for solving (28), which is based on the general framework of the branch-and-bound (BnB) methodology \cite{Tuy2016Convex}.

\begin{figure}
    \centering
    \includegraphics[width=0.8\columnwidth]{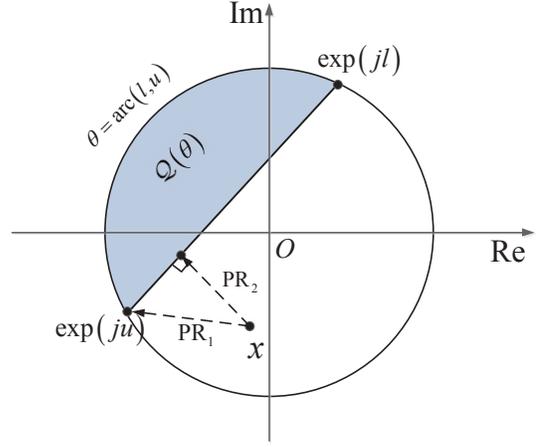}
    \caption{Feasible region and convex hull of problem (28).}
    \label{fig:2}
\end{figure}

\subsection{The Branch-and-bound Framework}
A typical BnB algorithm requires to partition the feasible region into several subregions, where we formulate corresponding subproblems. For each subproblem, we obtain a sequence of asymptotic lower-bounds and upper-bounds by well-designed bounding functions. In each iteration, we update the bounds and the set of the subproblems following the BnB rules until convergence, i.e., the difference between the upper-bound and lower-bound goes to zero.
\\\indent It is well-known that the worst-case complexity for the BnB algorithm is of the exponential order with respect to $N$, i.e., to search all the branches of the subproblems exhaustively \cite{Tuy2016Convex}. Nevertheless, by carefully choosing the tightest bounds, it is possible to efficiently identify and prune the unqualified branches, which accelerates the algorithm significantly.
\\\indent Let us denote the feasible region, i.e., the arc shown in Fig. 2, as ${\theta _n} = \operatorname{arc}\left( {{l_n},{u_n}} \right)$. Problem (31) can be compactly written as
\begin{equation}
\begin{gathered}
  {\mathcal{P}}\left({{\Theta_0}}\right):\;\mathop {\min }\limits_{\mathbf{x}} f\left({\mathbf{x}}\right) \hfill \\
  \;\;\;\;\;\;\;\;\;\;\;\;\;\;\;s.t.\;\;{\mathbf{x}} \in {\Theta_0}. \hfill \\
\end{gathered}
\end{equation}
where ${\Theta_0} = {\theta_1}\times{\theta_2}\times...\times{\theta_N}$, and $f\left({\mathbf{x}}\right)$ is defined in (31). By the above notations, a subproblem can be denoted as ${\mathcal{P}}\left({ \Theta}\right)$, where ${\Theta}\subseteq{\Theta_0}$ is the corresponding subregion. We then find a lower-bound of ${\mathcal{P}}\left({ \Theta}\right)$ by a lower-bounding function
\begin{equation}
    {f_L}\left( {\Theta } \right) = f\left( {{\mathbf{x}}_l} \right),
\end{equation}
where ${{\mathbf{x}}_l}$ is a relaxed solution that achieves the bound. In order to compute the upper-bound, we find a feasible solution ${{{\mathbf{x}}_{u}}}$ for ${\mathcal{P}}\left({ \Theta}\right)$. The upper-bounding function is thus given by
\begin{equation}
    {f_U}\left( {\Theta } \right) = f\left( {\mathbf{x}}_u \right).
\end{equation}
The above bounding functions (33) and (34) will be specified in the next subsection. Here we only use $f_L$ and $f_U$ to introduce the BnB framework for notational convenience. In the BnB algorithm, we store all the subproblems in a problem set ${\mathcal{S}}$, which will be updated together with the global bounds in each iteration by the following rules\cite{Tuy2016Convex}
\begin{enumerate}
    \item {\bf{Branching}}: Pick a problem ${\mathcal{P}}\left({\Theta}\right)\in {\mathcal{S}}$ that yields the smallest lower-bound. Equally divide ${\Theta}$ into two subregions following some subdivision rules detailed in the following, and generate two subproblems. Then delete ${\mathcal{P}}\left({\Theta}\right)$ in the problem set.
    \item {\bf{Pruning}} (optional): Evaluate the qualification of the two subproblems, if their lower-bounds are less than the current global upper-bound, add them into ${\mathcal{S}}$.
    \item {\bf{Bounding}}: Choose the smallest lower-bound and upper-bound from ${\mathcal{S}}$ as the bounds for the next iteration.
\end{enumerate}
Note that the pruning step is only for saving the memory of storing $\mathcal{S}$, and will not affect the effectiveness of the BnB procedure. This is because by choosing the smallest bounds in $\mathcal{S}$ we can always avoid the unqualified branches. For clarity, we summarize our BnB algorithm in Algorithm 2.
\renewcommand{\algorithmicrequire}{\textbf{Input:}}
\renewcommand{\algorithmicensure}{\textbf{Output:}}
\begin{algorithm}
\caption{Branch-and-Bound Method for Solving (28)}
\label{alg:B}
\begin{algorithmic}
    \REQUIRE $\mathbf{\tilde H},\mathbf{S},{\mathbf{x}}_0$, $0 \le \varepsilon \le 2$, tolerance threshold $\delta > 0$, bound functions $f_L$ and $f_U$.
    % \ENSURE Global minimizer ${\mathbf{x}}_{opt}$ for (27).
    \STATE {\bf{Initialization}}: Let ${{\Theta}_{0}}$ be the initial feasible region of problem (27), ${{\mathcal{S}}} = \left\{ {{\mathcal{P}}\left({\Theta_{0}}\right), {f_U}\left( {{\Theta_{0}}} \right), {f_L}\left( {{\Theta_{0}}} \right)} \right\}$ be the initialized subproblem set. Set ${UB} = {f_U}\left( {{\Theta_{0}}} \right)$, ${LB} = {f_L}\left( {{\Theta_{0}}} \right)$.
    \WHILE{{$UB-LB > \delta$}}
        \STATE {\bf{Branching}}
        \STATE a) Pick ${{\mathcal{P}}\left( {{\Theta }} \right)}\in {{\mathcal{S}}}$, such that ${f_L}\left( \Theta  \right) = {LB}$. Update ${{\mathcal{S}}} = {{\mathcal{S}}}\backslash {\mathcal{P}}\left( \Theta \right)$.
        \STATE b) Divide ${\Theta}$ into ${\Theta_{A}}$ and ${\Theta_{B}}$ following the chosen subdivision rule.
        % \STATE{\bf{Pruning}}
        % \IF{${f_L}\left( {{\Theta _I}} \right) \le {U^i}$}
        % \STATE ${{\mathcal{S}}^{i+1}} = {{\mathcal{S}}^{i+1}}\cup CR \left({\Theta_{I}}\right)$.
        % \ENDIF
        % \IF{${f_L}\left( {{\Theta _{II}}} \right) \le {U^i}$}
        % \STATE ${{\mathcal{S}}^{i+1}} = {{\mathcal{S}}^{i+1}}\cup CR \left({\Theta_{II}}\right)$.
        % \ENDIF
        \STATE {\bf{Bounding}}
        \STATE a) Compute ${f_U}\left( {{\Theta_{i}}} \right)$ and ${f_L}\left( {{\Theta_{i}}} \right)$ for ${\mathcal{P}}\left({\Theta_{i}}\right), i = A,B,$ and  add them to $\mathcal{S}$.
        \STATE b) Update $UB$ and $LB$ as the smallest upper-bound and lower-bound in ${\mathcal{S}}$, respectively.
    \ENDWHILE
    \ENSURE ${\mathbf{x}}_{opt} = $ the feasible solution that achieves $UB$.
\end{algorithmic}
%\AddNote{top}{bottom}{right}{Golden-section Search Method}
\end{algorithm}
\\\indent To ensure that Algorithm 2 converges in a finite number of iterations, the chosen subproblem for branching, the subdivision rule and the bounding functions $f_L$ and $f_U$ should satisfy the following conditions \cite{Tuy2016Convex}
\begin{enumerate}
    \item The branching is bounding-improving, i.e., in each iteration we choose the problem that yields the smallest lower-bound as the branching node.
    \item The subdivision is exhaustive, i.e., the maximum length of the subregions converges to zero as the iteration number goes to infinity.
    \item The bounding is consistent with branching, i.e., $UB - f_{opt}$ converges to zero as the maximum length of the subregions goes to zero, where ${ f_{opt}}$ is the optimal value of the original problem.
\end{enumerate}
Our Algorithm 2 satisfies condition 1) automatically. We then choose the subdivision rules to obtain the subproblems from the branching node. For a given node ${\mathcal{P}}\left(\Theta\right)$, we consider the following two rules:
\begin{itemize}
    \item {\bf{Basic rectangular subdivision}} (BRS): Equally divide $\Theta$ along $arc\left(l_n,u_n\right)$ and keep $arc\left(l_i,u_i\right), \forall i \ne n$ unchanged, where
    \begin{equation}
        n = \arg \mathop {\max }\limits_n \left\{ {{\phi _n}\left| {{\phi _n} = {u_n} - {l_n}} \right.} \right\}.
    \end{equation}
    \item {\bf{Adaptive rectangular subdivision}} (ARS): Equally divide $\Theta$ along $arc\left(l_n,u_n\right)$ and keep $arc\left(l_i,u_i\right), \forall i \ne n$ unchanged, where
    \begin{equation}
        n = \arg \mathop {\max }\limits_n \left\{ {{d_n}\left| {{d_n} = \left| {{x_u}\left( n \right) - {x_l}\left( n \right)} \right|} \right.} \right\}.
    \end{equation}
    In (35) ${\mathbf{x}}_u $ and ${\mathbf{x}}_l$ are the solutions associated with ${f_U}\left( {{\Theta}} \right)$ and ${f_L}\left( {{\Theta}} \right)$, respectively.
\end{itemize}
According to [45, Theorem 6.3 and 6.4], both the above two rules satisfy condition 2). In practical simulations, we observe that ARS has a faster convergence rate than BRS.
\subsection{Upper-bound and Lower-bound Acquisition}
It remains to develop approaches to acquire the lower and upper bounds, which are key to accelerating the BnB procedure. Following the approach in \cite{7914783}, we compute the lower-bound by the convex relaxation of (32). As shown in Fig. 2, the convex hull for each entry $x\left(n\right)$, denoted as ${\mathcal{Q}}\left(\theta _n\right)$, is a circular segment, and can be given as
\begin{equation}
    \mathcal{Q}\left( {{\theta _n}} \right):\left\{ {x\left| {\arg \left( x \right) \in {\theta _n},\left| x \right| \le 1} \right.} \right\}.
\end{equation}
By simple analytic geometry, the angle constraint can be equivalently written as
\begin{equation}
    \operatorname{Re} \left( {{x^*}\left( {\frac{{{e^{ju}} + {e^{jl}}}}{2}} \right)} \right) \ge \cos \left( {\frac{{u - l}}{2}} \right),
\end{equation}
which is nothing but a linear constraint. It follows that the constraint for the vector variable is
\begin{equation}
    \operatorname{Re} \left( {{{\mathbf{x}}^*} \circ \left( {\frac{{{e^{j{\mathbf{u}}}} + {e^{j{\mathbf{l}}}}}}{2}} \right)} \right) \ge \cos \left( {\frac{{{\mathbf{u}} - {\mathbf{l}}}}{2}} \right),
\end{equation}
where $\mathbf{u} = {\left[ {{{u}_1},{{u}_2},...,{{u}_N}} \right]^T}\in {\mathbb{R}^{N \times 1}},\mathbf{l} = {\left[ {{{l}_1},{{l}_2},...,{{l}_N}} \right]^T}\in {\mathbb{R}^{N \times 1}}$, and $\circ$ denotes the Hadamard product. Hence, the convex relaxation can be given as the following QCQP problem
\begin{subequations}
\begin{align}
  &\operatorname{QP-LB}:\mathop {\min}\limits_{\mathbf{x}} \;{\left\| {{\mathbf{Hx}} - {\mathbf{s}}} \right\|^2} \hfill \\
  &\;\;\;\;\;\;\;\;\;\;\;\;\;\;\;\;s.t.\operatorname{Re} \left( {{{\mathbf{x}}^*}\circ \left( {\frac{{{e^{j{\mathbf{u}}}} + {e^{j{\mathbf{l}}}}}}{2}} \right)} \right) \ge \cos \left( {\frac{{{\mathbf{u}} - {\mathbf{l}}}}{2}} \right), \hfill \\
  &\;\;\;\;\;\;\;\;\;\;\;\;\;\;\;\;\;\;\;\;\;\;\left|x\left( n \right)\right|^2 \le 1,\forall n.
\end{align}
\end{subequations}
Problem (40) can be efficiently solved via numerical solvers, e.g., the CVX toolbox. By doing so, we can readily obtain the lower-bound for each subproblem.
\\\indent A natural way to compute the upper-bound is to project each entry of the obtained solution ${\mathbf{x}}_{l}$ of (40) on the corresponding arc to get a feasible solution. Such a projector can be given in an element-wise manner as follows
\begin{equation}
{\operatorname{PR}}_1\left( x \right) = \left\{ \begin{gathered}
  {x \mathord{\left/
 {\vphantom {x {\left| x \right|}}} \right.
 \kern-\nulldelimiterspace} {\left| x \right|}},\arg x \in \left[ {l,u} \right], \hfill \\
  \exp \left( {jl} \right),\arg x \in \left[ {{(l+u)}/{2}+\pi,l+2\pi} \right], \hfill \\
  \exp \left( {ju} \right),\arg x \in \left[ {u,{(l+u)}/{2}+\pi} \right], \hfill \\
\end{gathered}  \right.
\end{equation}
where we omit the subscripts for convenience.
\\\indent The upper-bound obtained by the projector (41) is still loose in general. To get a tighter bound, one can use ${\operatorname{PR}}_1\left( {\mathbf{x}}_{l} \right)$ as the initial point, and solve the following non-convex QCQP
\begin{subequations}
\begin{align}
  &\operatorname{QP-UB}:\mathop {\min}\limits_{\mathbf{x}} \;{\left\| {{\mathbf{Hx}} - {\mathbf{s}}} \right\|^2} \hfill \\
  &\;\;\;\;\;\;\;\;\;\;\;\;\;\;\;\;s.t.\operatorname{Re} \left( {{{\mathbf{x}}^*}\circ \left( {\frac{{{e^{j{\mathbf{u}}}} + {e^{j{\mathbf{l}}}}}}{2}} \right)} \right) \ge \cos \left( {\frac{{{\mathbf{u}} - {\mathbf{l}}}}{2}} \right), \hfill \\
  &\;\;\;\;\;\;\;\;\;\;\;\;\;\;\;\;\;\;\;\;\;\;\left|x\left( n \right)\right|^2 = 1,\forall n.
\end{align}
\end{subequations}
which can be locally solved via the \emph{fmincon} solver in MATLAB. Since the solver employs descent methods, the obtained local minimizer is guaranteed to yield a smaller value than $f\left({\operatorname{PR}}_1\left( {\mathbf{x}}_{l} \right)\right)$.
\\\indent To further accelerate the speed for solving QP-LB and obtaining the bounds, we consider accelerated gradient projection (GP) methods \cite{Nesterov2004Introductory} in addition to the QCQP solvers. Given  $x_n \in \mathbb{C}$, the projector ${\operatorname{PR}}_2$ projects $x_n$ to the nearest point in the corresponding convex hull ${\mathcal{Q}}\left(\theta _n\right)$. The details for deriving ${\operatorname{PR}}_2$ are provided in the Appendix. Here we briefly introduce our iterative scheme as
\begin{equation}
  {\mathbf{v}} = {{\mathbf{x}}^{\left( k \right)}} + \frac{{k - 1}}{{k + 2}}\left( {{{\mathbf{x}}^{\left( k \right)}} - {{\mathbf{x}}^{\left( {k - 1} \right)}}} \right),
\end{equation}
\begin{equation}
  {{\mathbf{x}}^{\left( {k + 1} \right)}} = {\operatorname{PR}_2}\left( {{\mathbf{v}} - 2s{{\mathbf{\tilde H}}^H}\left( {{\mathbf{\tilde H}}{\mathbf{v}} - {\mathbf{s}}} \right)} \right),
\end{equation}
where we start from ${\mathbf{x}}^{\left(0\right)}$ and ${\mathbf{x}}^{\left(1\right)} = {\mathbf{x}}^{\left(0\right)}$. For the least-squares objective function, we choose the stepsize as $s = 1/{\tilde \lambda}_{\max}$, where ${\tilde \lambda}_{\max}$ is the maximum eigenvalue of ${\mathbf{\tilde H}}^H{\mathbf{\tilde H}}$, i.e., the Lipschitz constant.
\\\indent Note that the above iteration scheme can only be used for convex feasible regions due to the interpolation operation (43). For the non-convex QP-UB problem (42), we use ${\mathbf{x}}^{\left( k \right)}$ instead of the interpolated point $\mathbf{v}$, and replace the projector ${\operatorname{PR}}_2$ with ${\operatorname{PR}}_1$, which projects the point onto the arc, i.e., the feasible region. Similar to (40), we use ${\operatorname{PR}}_1\left( {\mathbf{x}}_{l} \right)$ as the initial point.
\\\indent Based on \cite{lobo1998applications}, the complexity for using interior-point method to solve the QCQP problems is ${\mathcal{O}}\left(N^{3}\right)$ per iteration. For both gradient-based methods, the costs are ${\mathcal{O}}\left(NK\right)$ in each iteration, which are far more efficient in terms of a fixed iteration number.
\subsection{Convergence Analysis and Worst-case Complexity}
We end this section by analyzing the convergence behavior and the worst-case complexity for the proposed Algorithm 2. The convergence proof is to show that our bounding functions $f_L$ and $f_U$ satisfy the condition 3). Recall the definitions of $\phi_n$ and $d_n$ in (35) and (36).
By denoting $\phi _{\max } = \max \left\{ {\phi _n} \right\}, d _{\max } = \max \left\{ {d _n} \right\}$, we have the following Lemma 1.
\begin{lemma}
As $\phi_{\max}$ or $d_{\max}$ goes to zero, the difference between $UB$ and $LB$ uniformly converges to zero, i.e.,
\begin{equation}
\begin{gathered}
  \forall \delta  > 0,\exists \eta_1, \eta_2 \ge 0 \;\;s.t.\; \hfill \\
  {\phi _{\max }} \le \eta_1 \;\;\text{or}\;\; {d _{\max }} \le \eta_2  \Rightarrow {UB} - {LB} \le \delta . \hfill \\
\end{gathered}
\end{equation}
\end{lemma}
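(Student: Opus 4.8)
\emph{Proof plan.} I would reduce everything to a single sandwich. By the standing properties of the BnB scheme, $f_L(\Theta)\le f_{opt}\le f_U(\Theta)$ for every subregion $\Theta$, $UB$ is the smallest upper bound in $\mathcal S$, and $LB=f_L(\Theta^\star)$ for the branching node $\mathcal P(\Theta^\star)$ that attains it; hence
$0\le UB-LB\le f_U(\Theta^\star)-f_L(\Theta^\star)=f(\mathbf x_u)-f(\mathbf x_l)$,
where $\mathbf x_l$ is the minimizer of the convex relaxation QP-LB over $\mathcal Q(\Theta^\star)=\prod_n\mathcal Q(\theta_n)$ (so $LB=f(\mathbf x_l)$) and $\mathbf x_u$ is the feasible point returned by the projection/refinement step (so $f_U(\Theta^\star)=f(\mathbf x_u)$). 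So the task is to bound $f(\mathbf x_u)-f(\mathbf x_l)$ by a quantity that vanishes with $\phi_{\max}$ (under BRS) and with $d_{\max}$ (under ARS). The plan has three steps: (i) establish a Lipschitz modulus for $f$ on the relevant bounded set; (ii) bound $\|\mathbf x_u-\mathbf x_l\|$ by $\phi_{\max}$, respectively by $d_{\max}$; (iii) combine and choose $\eta_1,\eta_2$.

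For (i), $f(\mathbf x)=\|\tilde{\mathbf H}\mathbf x-\mathbf s\|^2$ is smooth with $\nabla f(\mathbf x)=2\tilde{\mathbf H}^H(\tilde{\mathbf H}\mathbf x-\mathbf s)$, and every iterate that occurs ($\mathbf x_l$ and $\mathbf x_u$, and the segment joining them) lies in the convex set $\mathcal B=\{\mathbf x:|x(n)|\le 1,\ \forall n\}$, on which $\|\nabla f(\mathbf x)\|\le 2\|\tilde{\mathbf H}\|\big(\sqrt N\|\tilde{\mathbf H}\|+\|\mathbf s\|\big)=:L_f$; consequently $|f(\mathbf a)-f(\mathbf b)|\le L_f\|\mathbf a-\mathbf b\|$ for all $\mathbf a,\mathbf b\in\mathcal B$. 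For (ii) under ARS this is immediate from the definition (36): $d_{\max}=\max_n|x_u(n)-x_l(n)|=\|\mathbf x_u-\mathbf x_l\|_\infty$, so $\|\mathbf x_u-\mathbf x_l\|\le\sqrt N\,d_{\max}$. For (ii) under BRS it suffices to bound $f\big(\operatorname{PR}_1(\mathbf x_l)\big)-f(\mathbf x_l)$, because the QP-UB refinement, started from $\operatorname{PR}_1(\mathbf x_l)$ by a feasible descent method, only decreases the objective below $f(\operatorname{PR}_1(\mathbf x_l))$ while keeping the iterate on the arcs (hence still feasible for QP-LB, so $f(\mathbf x_u)\ge f(\mathbf x_l)$ is preserved). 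Here the geometry of $\mathcal Q(\theta_n)$ is used: since $\mathcal Q(\theta_n)$ is the circular segment obtained by intersecting the unit disk with the half-plane in (40b), every $z\in\mathcal Q(\theta_n)$ has $\arg z\in[l_n,u_n]$ and $|z|\ge\cos(\phi_n/2)$ (the distance of the arc's chord to the origin); therefore $\operatorname{PR}_1$ acts on $[\mathbf x_l]_n$ by pure radial normalization and $\big|[\operatorname{PR}_1(\mathbf x_l)]_n-[\mathbf x_l]_n\big|=1-|[\mathbf x_l]_n|\le 1-\cos(\phi_n/2)$, giving $\|\operatorname{PR}_1(\mathbf x_l)-\mathbf x_l\|\le\sqrt N\big(1-\cos(\phi_{\max}/2)\big)$.

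Assembling (i)–(ii) yields $0\le UB-LB\le L_f\sqrt N\,d_{\max}$ and $0\le UB-LB\le L_f\sqrt N\big(1-\cos(\phi_{\max}/2)\big)$; since $1-\cos(\phi_{\max}/2)\to0$ as $\phi_{\max}\to0$, the choices $\eta_2=\delta/(L_f\sqrt N)$ and $\eta_1=2\arccos\!\big(1-\delta/(L_f\sqrt N)\big)$ (the latter well defined once $\delta$ is small, the statement being vacuous otherwise) establish the implication in (44). I expect step (ii)–BRS to be the main obstacle: it hinges on the relaxation being the circular segment rather than the full sector of (37) — for the sector, points near the origin remain at distance $\approx 1$ from the arc no matter how small $\phi_n$ is, and the gap would \emph{not} vanish — and on verifying that the local upper-bound refinement cannot break the chain $f(\mathbf x_l)\le f(\mathbf x_u)\le f(\operatorname{PR}_1(\mathbf x_l))$. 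A minor bookkeeping point is to fix that $\phi_{\max}$ and $d_{\max}$ refer to the node $\Theta^\star$ currently selected for branching (taking instead the maximum over all nodes in $\mathcal S$ only loosens the inequalities), so that the lemma meshes with conditions 1)–2) already verified for the Tuy framework.
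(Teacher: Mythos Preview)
Your proposal is correct and follows the same skeleton as the paper: reduce $UB-LB$ to $f(\mathbf{x}_u)-f(\mathbf{x}_l)$, control this by a Lipschitz bound on $\nabla f$ over the unit polydisk, and then show the relevant geometric size of the subregion goes to zero. For the $d_{\max}$ part your argument is identical to the paper's.

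The one genuine difference is in the $\phi_{\max}$ bound. The paper does not touch the refinement step at all: it simply observes that both $\mathbf{x}_u$ and $\mathbf{x}_l$ lie in the circular segment $\mathcal{Q}(\theta_n)$, whose Euclidean diameter is the chord $2\sin(\phi_n/2)$ (or the unit diameter when $\phi_n>\pi$), and hence $\|\mathbf{x}_u-\mathbf{x}_l\|\le 2\sqrt{N}\sin(\phi_{\max}/2)$ directly. You instead bound $\|\operatorname{PR}_1(\mathbf{x}_l)-\mathbf{x}_l\|$ by the sagitta $1-\cos(\phi_n/2)$ and then invoke the descent property $f(\mathbf{x}_u)\le f(\operatorname{PR}_1(\mathbf{x}_l))$. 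Your route buys a slightly sharper constant (sagitta $\le$ chord) and makes explicit why the convex-hull relaxation must be the \emph{segment} and not the sector --- a point you correctly flag as essential. The paper's route is shorter and more robust: it needs no assumption on how $\mathbf{x}_u$ is produced beyond feasibility, and it handles $\phi_n>\pi$ without a separate case, whereas your sagitta argument tacitly requires $\phi_{\max}\le\pi$ (harmless here, since you may take $\eta_1\le\pi$). Either argument is fine for the lemma; the paper's diameter bound would let you drop the discussion of the QP-UB refinement entirely.
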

\begin{proof}
Let us first denote the points that generate $UB$ and $LB$ as ${\mathbf{x}}_u$ and ${\mathbf{x}}_l$, i.e., $UB = f\left({\mathbf{x}}_u\right), LB = f\left({\mathbf{x}}_l\right)$. Following the Lagrange Mean-value Theorem we have
\begin{equation}
\begin{gathered}
  {UB} - {LB} = f\left( {{{\mathbf{x}}_u}} \right) - f\left( {{{\mathbf{x}}_l}} \right) \hfill \\
   = \nabla {f^H}\left( {\mathbf{z}} \right)\left( {{{\mathbf{x}}_u} - {{\mathbf{x}}_l}} \right) \hfill \\
   \le \left\| {\nabla f\left( {\mathbf{z}} \right)} \right\|\left\| {\left( {{{\mathbf{x}}_u} - {{\mathbf{x}}_l}} \right)} \right\|, \hfill \\
\end{gathered}
\end{equation}
where
\begin{equation}
    {\mathbf{z}} \in \left\{ {{\mathbf{w}}\left| {{\mathbf{w}} = t{{\mathbf{x}}_u} + \left( {1 - t} \right){{\mathbf{x}}_l},t \in \left[ {0,1} \right]} \right.} \right\}.
\end{equation}
The upper-bound of the gradient is given as
\begin{equation}
\begin{gathered}
  \left\| {\nabla f\left( {\mathbf{z}} \right)} \right\| = 2\left\| {{{\mathbf{\tilde H}}^H}\left( {{\mathbf{\tilde H}}{\mathbf{z}} - {\mathbf{s}}} \right)} \right\| \hfill \\
   \le 2\left\| {{{\mathbf{\tilde H}}^H}{\mathbf{\tilde H}}{\mathbf{z}}} \right\| + 2\left\| {{{\mathbf{\tilde H}}^H}{\mathbf{s}}} \right\| \hfill \\
   \le 2\sqrt N {{\tilde \lambda} _{\max }} + 2\left\| {{{\mathbf{\tilde H}}^H}{\mathbf{s}}} \right\|, \hfill \\
\end{gathered}
\end{equation}
where the second line of (48) is based on the triangle inequality, the third line is based on the definition of the matrix $l_2$ norm.
\\\indent For the convex hull of each $\operatorname{arc}\left(l_n,u_n\right)$, the longest line segment is the chord shown in Fig. 2 ($\phi_n \le \pi$) or the diameter ($\phi_n \ge \pi$). By simple geometric relations we have
\begin{equation}
    \left\| {{{\mathbf{x}}_u} - {{\mathbf{x}}_l}} \right\| \le \sqrt N {d_{\max }} \le 2\sqrt {\sum\limits_{n = 1}^N {{{\sin }^2}\left( {\frac{{\min \left( {{\phi _n},\pi } \right)}}{2}} \right)} } .
\end{equation}
For $\phi_n \le \pi ,\forall n$, it follows that
\begin{equation}
    \left\| {{{\mathbf{x}}_u} - {{\mathbf{x}}_l}} \right\| \le \sqrt N {d_{\max }} \le 2\sqrt N \sin \left( {\frac{{{\phi _{\max }}}}{2}} \right).
\end{equation}
By using (46), (48) and (50) we obtain
\begin{equation}
{UB} - {LB} \le 4\left( {N{{\tilde \lambda} _{\max }} + \sqrt N \left\| {{{\mathbf{\tilde H}}^H}{\mathbf{s}}} \right\|} \right)\sin \left( {\frac{{{\phi _{\max }}}}{2}} \right),
\end{equation}
\begin{equation}
UB - LB \le 2\left( {N{{\tilde \lambda} _{\max }} + \sqrt N \left\| {{{\mathbf{\tilde H}}^H}{\mathbf{s}}} \right\|} \right){d_{\max }}.
\end{equation}
Given any $\delta > 0$, let
\begin{equation}
    \eta_1  = \min \left( {\pi ,2\arcsin \left( {\frac{\delta }{{4\left( {N{{\tilde \lambda} _{\max }} + \sqrt N \left\| {{{\mathbf{\tilde H}}^H}{\mathbf{s}}} \right\|} \right)}}} \right)} \right),
\end{equation}
\begin{equation}
    {\eta _2} = \frac{\delta }{{2\left( {N{{\tilde \lambda} _{\max }} + \sqrt N \left\| {{{\mathbf{\tilde H}}^H}{\mathbf{s}}} \right\|} \right)}},
\end{equation}
we have ${UB} - {LB} \le \delta$ if ${\phi}_{\max} \le \eta_1$ or $d_{\max} \le \eta_2$.
\end{proof}
\begin{thm}
Algorithm 2 converges in a finite number of iterations to a value arbitrary close to ${ f_{opt}}$.
\end{thm}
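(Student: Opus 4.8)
The plan is to combine Lemma 1 with the three conditions that, by \cite{Tuy2016Convex}, guarantee finite-iteration convergence of a BnB scheme, and then argue termination of the specific \texttt{while} loop in Algorithm 2. First I would recall that condition 1) (branching is bounding-improving) holds by construction, since step a) of the \textbf{Branching} phase always selects the subproblem attaining the current $LB$. Next I would verify condition 2) (exhaustive subdivision): for BRS the bisected arc is the one with the largest $\phi_n$, so after enough bisections $\phi_{\max}\to 0$ across any nested sequence of subregions; for ARS the same holds with $d_n$ in place of $\phi_n$. This is exactly the content invoked via [45, Theorem 6.3 and 6.4], so I would cite that rather than reprove it. The key new ingredient is condition 3) (bounding consistent with branching): I would apply Lemma 1, which says $UB-LB\to 0$ uniformly as $\phi_{\max}$ or $d_{\max}\to 0$. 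Since $LB\le f_{opt}\le UB$ always holds (the lower-bounding function relaxes the feasible set and the upper-bounding point is feasible), this forces $UB-f_{opt}\to 0$ along the branching process, which is precisely condition 3).

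The second half of the argument is to turn these three conditions into a statement about Algorithm 2's termination. Here I would use a standard BnB argument: suppose, for contradiction, that the \texttt{while} loop never exits, i.e., $UB-LB>\delta$ forever. The sequence of selected branching nodes then contains a nested subsequence $\Theta^{(1)}\supseteq\Theta^{(2)}\supseteq\cdots$ whose diameters shrink to zero by condition 2); along this subsequence $LB$ is nondecreasing and $UB$ is nonincreasing, and by condition 3) (via Lemma 1) the gap $UB-LB$ evaluated on these nodes falls below $\delta$ after finitely many bisections of that branch. But the global $LB$ in the algorithm is the smallest lower-bound over all stored subproblems, and the global $UB$ the smallest feasible value found, so the global gap is no larger than the gap on any single branch that has been refined enough — contradicting $UB-LB>\delta$. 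Hence the loop terminates after finitely many iterations, at which point $UB-LB\le\delta$, and since $LB\le f_{opt}\le UB$ we get $|UB-f_{opt}|\le\delta$, i.e., the returned $\mathbf{x}_{opt}$ is $\delta$-optimal. Because $\delta>0$ is arbitrary, Algorithm 2 converges to a value arbitrarily close to $f_{opt}$.

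The main obstacle I anticipate is making the "nested subsequence with shrinking diameter" step fully rigorous: one must argue that the bounding-improving rule actually forces infinitely many refinements \emph{of the same} decreasing chain of subregions rather than spreading effort over incomparable branches, and that the maximum arc-length (resp.\ $d_n$) along that chain genuinely tends to zero rather than stalling. This is where the exhaustiveness of the subdivision rule (condition 2) and the selection of the \emph{widest} coordinate for bisection in \eqref{} (the BRS/ARS rules) are essential, and where I would lean most heavily on the general BnB convergence theory in \cite{Tuy2016Convex} rather than re-deriving the combinatorial bookkeeping. A secondary, minor point is to confirm $LB\le f_{opt}$: this follows because QP-LB in \eqref{} optimizes over the convex hull $\mathcal{Q}(\theta_n)\supseteq\theta_n$, so its optimum lower-bounds the optimum of $\mathcal{P}(\Theta)$, and taking the minimum over the partition preserves this.
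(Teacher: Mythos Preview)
Your proposal is correct and follows essentially the same route as the paper: verify conditions 1) and 2) by construction and by citing \cite{Tuy2016Convex}, then use Lemma~1 together with the sandwich $LB\le f_{opt}\le UB$ (the paper writes it as $0\le UB-f_{opt}\le UB-LB$) to establish condition~3), and conclude via the general BnB convergence theory. The paper's own proof is in fact much terser than yours---it simply checks the three conditions and stops, delegating the entire termination bookkeeping to \cite{Tuy2016Convex}---so your nested-subsequence contradiction argument and the discussion of the ``main obstacle'' are additional detail that the paper does not spell out.
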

\begin{proof}
Algorithm 2 satisfies both conditions 1) and 2). Furthermore, according to the definition of $UB$ and $LB$ we have
\begin{equation}
    0 \le UB-{{ f_{opt}}} \le UB - LB.
\end{equation}
According to Lemma 1, the bounding is consistent with branching for the proposed two subdivision rules. Therefore, Algorithm 2 satisfies condition 3), which completes the proof.
\end{proof}
The following Theorem 2 specifies the worst-case complexity of Algorithm 2 for using BRS.
\begin{thm}
When the BRS is used, Algorithm 2 converges to a $\delta$-optimal solution for at most
\begin{equation}
T = \left\lceil {\frac{{{2^{N + 1}}{{\arccos }^N}\left( {1 - {\varepsilon ^2}/2} \right)}}{\eta_1 }} \right\rceil
\end{equation}
iterations, where $\eta_1$ is given by (53).
\end{thm}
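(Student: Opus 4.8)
The plan is to turn the termination criterion of Lemma~1 into a bound on the depth of the branch-and-bound search tree generated by Algorithm~2 with the BRS rule, and then to count the nodes of such a tree. The starting point is that, by the definitions of $l_n$ and $u_n$ in (30), every initial arc has the common length $\phi_n^{(0)}=u_n-l_n=2\arccos(1-\varepsilon^2/2)=:\ell_0$, so $\Theta_0$ is an $N$-fold product of arcs all of length $\ell_0$. By Lemma~1, any subregion $\Theta$ with $\phi_{\max}(\Theta)\le\eta_1$ has local gap $f_U(\Theta)-f_L(\Theta)\le\delta$. Since the branching step of Algorithm~2 always picks the leaf $\Theta$ attaining the current $LB=f_L(\Theta)$, the first time such a leaf has $\phi_{\max}\le\eta_1$ we get $UB\le f_U(\Theta)\le f_L(\Theta)+\delta=LB+\delta$, and the while-loop exits. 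Hence it is enough to show that the tree cannot grow past a depth at which every leaf is guaranteed to have all of its arcs no longer than $\eta_1$.

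The combinatorial heart of the proof is that BRS keeps the subdivisions balanced across the $N$ coordinates. Since each branch bisects a currently-longest arc and all arcs of $\Theta_0$ are equal, a short induction on depth shows that every node at depth $d$ has the same arc-length multiset, with longest arc exactly $\ell_0\,2^{-\lfloor d/N\rfloor}$. Consequently, putting $D=N\bigl\lceil\log_2(\ell_0/\eta_1)\bigr\rceil$, every node of depth $\ge D$ has $\phi_{\max}\le\eta_1$ and therefore, by the previous paragraph, can never be selected for branching without ending the algorithm; so all branched nodes have depth $\le D-1$ and the whole search tree has depth $\le D$. A full binary tree of depth $\le D$ has at most $2^D$ leaves, hence at most $2^D-1$ internal nodes, and since one iteration of Algorithm~2 performs exactly one branching (i.e., it replaces one leaf by two and thus creates exactly one internal node), the number of iterations is at most $2^D-1$. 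Substituting $D$, using $\lceil\log_2 y\rceil<\log_2 y+1$, and writing $\ell_0=2\arccos(1-\varepsilon^2/2)$ collapses $2^D$ into the closed form in the statement, with $\eta_1$ as in (53); the outer ceiling makes the bound integral and more than absorbs the $-1$.

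The step I expect to be the main obstacle is the uniform depth-to-$\phi_{\max}$ estimate, namely that every depth-$d$ node has longest arc $\ell_0\,2^{-\lfloor d/N\rfloor}$. The induction must track how the cut-the-longest-arc rule forces the numbers of bisections applied to the $N$ coordinates of a given node to differ by at most one, even though in a full run distinct subregions are refined at interleaved iterations; one also has to be careful that the relevant size here is the angular span $\phi_n$ that BRS actually halves, not the chord length $d_n$ used by ARS, so that the $\phi_{\max}$-condition of Lemma~1 (with $\eta_1$), rather than its $d_{\max}$-condition (with $\eta_2$), is the right one to invoke. A secondary point deserving a sentence or two is that the worst case really does force the tree out to depth of order $D$: because branching picks the minimal-$LB$ leaf, an adversarial instance can keep a coarse leaf attaining $LB$ until essentially the entire tree has been refined to the $\eta_1$-level, so the crude ``at most $2^D$ leaves'' estimate is not loose beyond the stated constant. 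The remaining pieces --- counting leaves versus internal nodes, and the final algebraic simplification --- are routine.
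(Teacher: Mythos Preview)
Your depth-tracking argument for BRS is internally sound: with all initial arcs equal to $\ell_0=2\arccos(1-\varepsilon^2/2)$, the multiset of edge lengths at any depth-$d$ node is indeed forced, so $\phi_{\max}=\ell_0\,2^{-\lfloor d/N\rfloor}$ there, and every branched node has depth at most $D-1$ with $D=N\lceil\log_2(\ell_0/\eta_1)\rceil$. The genuine gap is in the last line, where you assert that $2^D$ ``collapses into the closed form in the statement''. Carry out the substitution:
\[
2^D \;<\; 2^{N(\log_2(\ell_0/\eta_1)+1)} \;=\; 2^N\left(\frac{\ell_0}{\eta_1}\right)^{N} \;=\; \frac{2^{2N}\arccos^N\!\left(1-\varepsilon^2/2\right)}{\eta_1^{\,N}}.
\]
This has $\eta_1^{N}$ in the denominator, not $\eta_1$; it coincides with the stated $T=\lceil 2^{N+1}\arccos^N(1-\varepsilon^2/2)/\eta_1\rceil$ only when $N=1$. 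For $N\ge 2$ your bound is weaker by a factor of order $(2/\eta_1)^{N-1}$, so the proposal does not establish the theorem as stated.

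The paper's proof takes a different route and does not count tree nodes at all. It sets $\operatorname{vol}(\Theta_0)=\ell_0^N$ and invokes, from an external reference on rectangular branch-and-bound, the relation $\phi_{\max}/2\le\operatorname{vol}(\Theta_0)/T$ for the box handled at iteration $T$; combined with $\phi_{\max}>\eta_1$ (which, by Lemma~1, must hold while $UB-LB>\delta$), this yields $T<2\operatorname{vol}(\Theta_0)/\eta_1$ in one line, i.e., the linear-in-$1/\eta_1$ bound. That volume-versus-iteration inequality is precisely what a depth-only count cannot supply: enumerating all leaves of a full binary tree of depth $D$ inevitably produces the $\eta_1^{-N}$ scaling. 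To recover the theorem with its exact constant you would have to import or re-prove that cited inequality rather than bound the tree depth.
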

\begin{proof}
Define $\operatorname{vol} \left( {{\Theta _0}} \right) = {\left( {2\arccos \left( {1 - {\varepsilon ^2}/2} \right)} \right)^N}$ as the volume of the initialized feasible region. Assume that Algorithm 2 terminates at the \emph{T}-th iteration. According to \cite{6119233}, we have
\begin{equation}
    \frac{\phi_{\max}}{2} \le \frac{{\operatorname{vol} \left( {{\Theta _{0}}} \right)}}{T} \le \frac{\eta_1}{2}.
\end{equation}
It follows that
\begin{equation}
    T \ge \frac{{2\operatorname{vol} \left( {{\Theta _{0}}} \right)}}{\eta_1} = \frac{{{2^{N + 1}}{{\arccos }^N}\left( {1 - {\varepsilon ^2}/2} \right)}}{\eta_1},
\end{equation}
which completes the proof.
\end{proof}
Although the worst-case costs of both BnB-ARS and BnB-BRS are at the exponential order with $N$, our simulations show that in most cases, the algorithm terminates at a small iteration number thanks to the tight bounds.

\section{Numerical Results}
In this section, we present numerical results to validate the effectiveness of the proposed waveform design approaches. For convenience, we set $P_T = 1$, and each entry of the channel matrix $\mathbf{H}$ subject to standard Complex Gaussian distribution, i.e., $h_{i,j} \sim\mathcal{C}\mathcal{N}\left( {0,1} \right) $. In all the simulations, we set $N = 16$ and employ a ULA with half-wavelength spacing between the adjacent antennas. The constellation chosen for the communication users is the unit-power QPSK alphabet, i.e., the power of each entry in the symbol matrix $\mathbf{S}$ is 1.
\subsection{Dual-functional Waveform Design with Given Radar Beampatterns}
\begin{figure}
    \centering
    \includegraphics[width=\columnwidth]{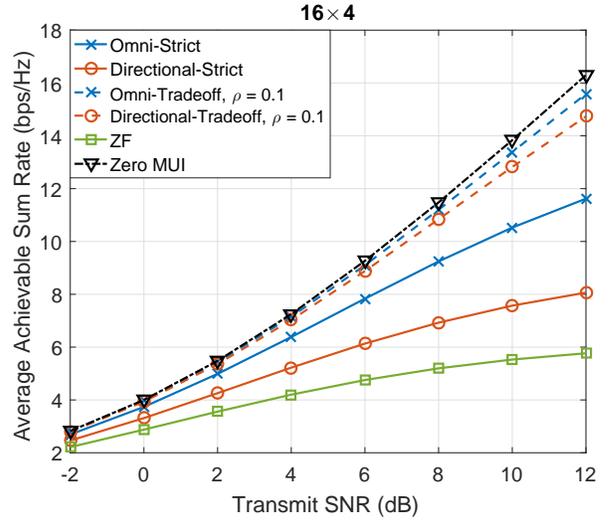}
    \caption{Sum-rate comparison for different approaches, $N = 16, K = 4$.}
    \label{fig:3}
\end{figure}
\begin{figure}
    \centering
    \includegraphics[width=\columnwidth]{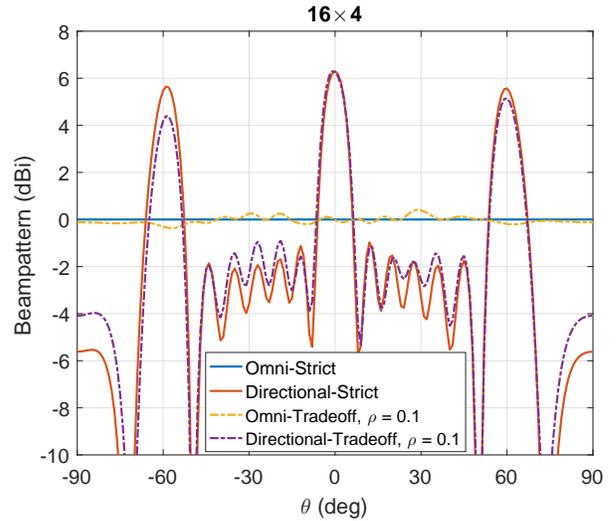}
    \caption{Radar beampatterns obtained by different approaches, $N = 16, K = 4$.}
    \label{fig:4}
\end{figure}
We first show the communication sum-rate obtained by different approaches as well as the associated radar beampatterns in Fig. 3 and Fig. 4, respectively, where we define $\operatorname{SNR} = {P_T}/{N_0}$, and use `Omni', `Directional' and `ZF' to represent omnidirectional beampattern design, directional beampattern design and zero-forcing precoding based on the problems (8) and (10). Further, we denote the waveform design with strict equality constraints (8) and (10) and the trade-off design (16) as `Strict' and `Tradeoff' respectively. The length of the communication frame/radar pulse is set as $L = 20$. For directional beampattern design, we consider three targets of interest with angles of $-\pi/3, 0$ and $\pi/3$, and exploit the classic Least-Squares techniques \cite{4516997} to obtain the desired covariance matrix ${\mathbf{R}}_d$ as defined in (10). It can be observed in Fig.3 that, the proposed two strict waveform designs outperform the communication-only zero-forcing precoding significantly, despite that their computational costs remain at the same level as we have discussed in Section III. The resultant radar beampatterns are shown in Fig. 4 for `Strict', which are exactly the same with the desired beampatterns. Moreover, by introducing a small weighting factor $\rho = 0.1$ to the communication side, the sum-rates for trade-off designs increase significantly by approaching to that of the zero MUI case, i.e., the AWGN capacity. Meanwhile in Fig. 4, the corresponding radar beampatterns only experience slight performance-loss.
\begin{figure}
    \centering
    \includegraphics[width=\columnwidth]{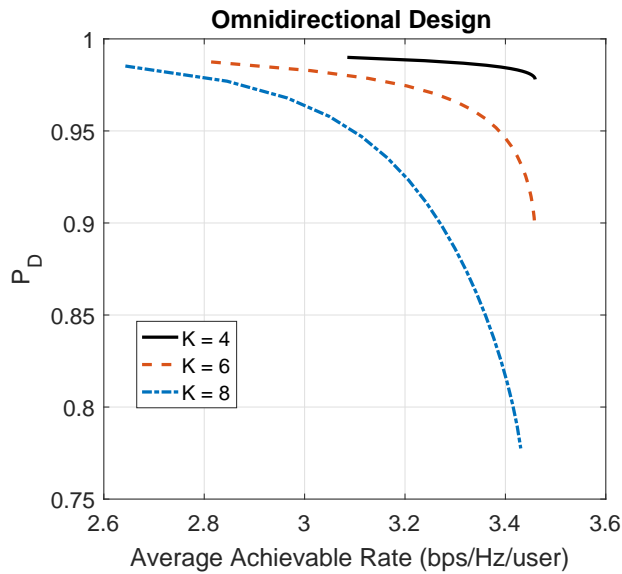}
    \caption{Trade-off between the achievable rate per user and the radar detection probability for omnidirectional beampattern design, $N = 16, \text{radar}\;\;{\text{SNR}} = -6\text{dB}, P_{FA} = 10^{-7}$.}
    \label{fig:5}
\end{figure}
\begin{figure}
    \centering
    \includegraphics[width=\columnwidth]{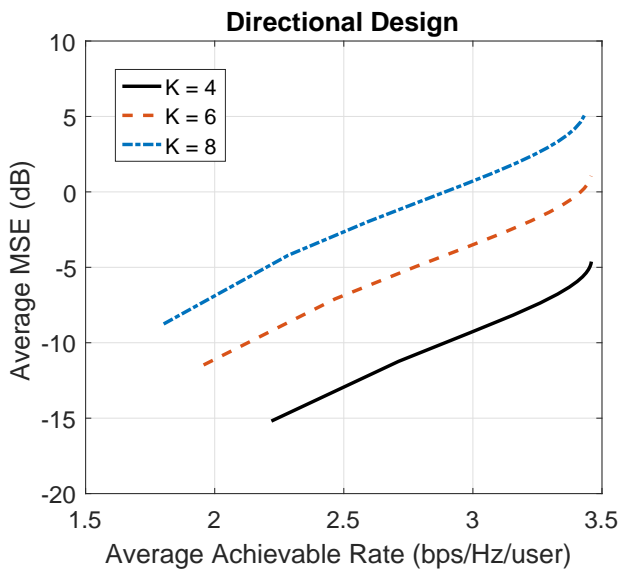}
    \caption{Trade-off of the achievable rate per user and the average MSE between the designed and desired directional beampattern, $N = 16$.}
    \label{fig:6}
\end{figure}
\\\indent In Fig. 5 and 6, we aim to explicitly show the trade-offs between the communication and radar performance. For omnidirectional beampattern design, the detection probability $P_D$ is used as the metric, where we consider the constant false-alarm rate (CFAR) detection for a point-like target in the far field, located at the angle of $\pi/5$. The receive SNR is fixed at -6dB. The false-alarm probability for radar is $P_{FA} = 10^{-7}$. We calculate the detection probability based on [9, eq. (69)]. It can be seen that there exists a trade-off between the communication rate and the radar detection performance. For a fixed $P_D$, the achievable rate increases with the decrease of the number of users, which suggests that the MUI energy can be further minimized by increasing the DoFs. The same trends appear in Fig. 6 for the directional beampattern, where we employ the mean squared error (MSE) between the desired and obtained directional beampatterns as the radar metric. Both figures prove that our approach can achieve a favorable trade-off between radar and communications.
\subsection{Dual-functional Constant Modulus Waveform Design}
% \begin{figure}
% \centering
% \subfloat[]{\includegraphics[width=0.95\columnwidth]{CR_Fig7a.eps}
% \label{fig.7}}
% \vspace{0.1in}
% \hspace{.1in}
% \subfloat[]{\includegraphics[width=0.95\columnwidth]{CR_Fig7b.eps}
% \label{fig.8}}
% \caption{Convergence Behavior for BnB Algorithm for $N = 16, K = 4, \varepsilon = 1$. (a) Adaptive rectangle subdivsion; (b) Basic rectangle subdivsion.}
% \label{fig_BNB}
% \end{figure}
\begin{figure}
    \centering
    \includegraphics[width=\columnwidth]{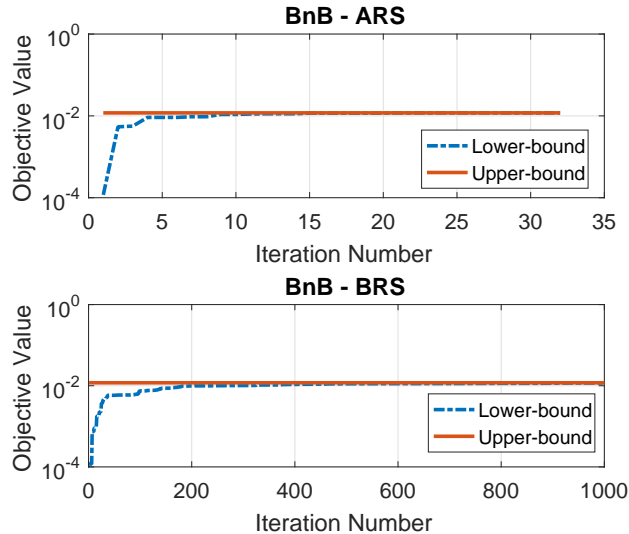}
    \caption{Convergence Behavior of BnB Algorithm for $N = 16, K = 4, \varepsilon = 1$.}
    \label{fig:7}
\end{figure}
We show the results for solving the waveform optimization problem with CMC and SC in Fig. 7-9. Following the simulation configurations in \cite{7450660}, we employ the orthogonal chirp waveform matrix as the reference signal. The convergence behavior of the proposed BnB algorithm for solving (28) is shown in Fig. 7, with $N = 16, K = 4, \varepsilon = 1$, where we compare the performance of the two different subdivision rules, i.e., ARS and BRS. Both methods converge in a finite number of iterations with a nearly constant upper-bound, which suggests that we can reach the optimal value of problem (28) by iteratively using the local algorithms for several times, e.g., QCQP solver or the proposed gradient projection method. Nevertheless, due to the non-convexity of the problem, we need BnB algorithm to confirm that this is indeed a global optimum. It can be also observed that the BnB-ARS has a faster convergence rate than BnB-BRS, which is consistent with the analysis in \cite{Tuy2016Convex}.
\begin{figure}
    \centering
    \includegraphics[width=\columnwidth]{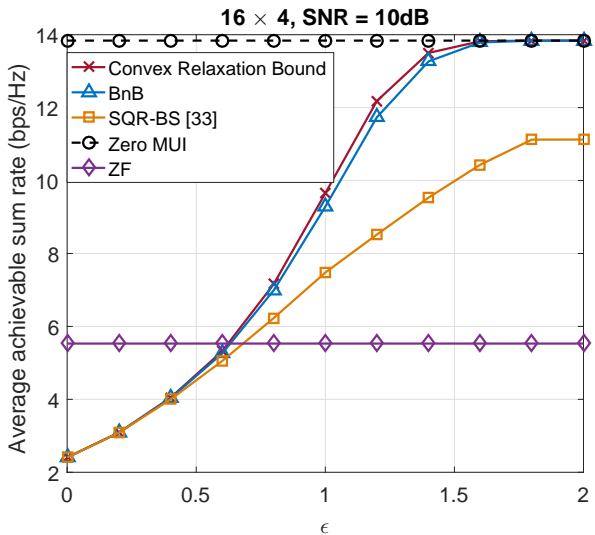}
    \caption{Trade-off between the communication sum-rate and radar waveform similarity, $N = 16, K = 4, \operatorname{SNR} = 10\text{dB}$.}
    \label{fig:9}
\end{figure}
\begin{figure}
\centering
\subfloat[]{\includegraphics[width=\columnwidth]{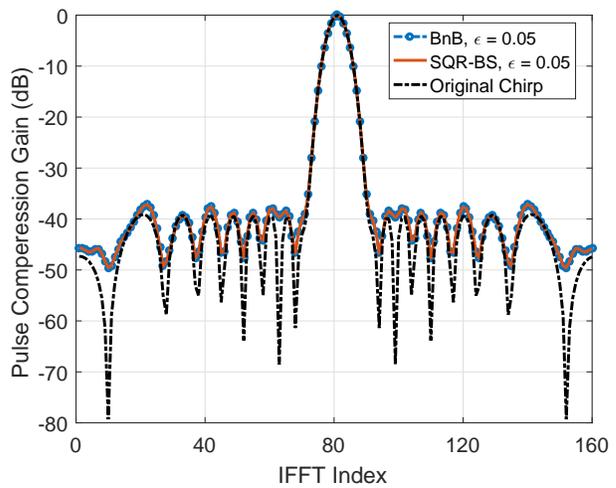}
\label{fig.10}}
\vspace{0.1in}
\hspace{.1in}
\subfloat[]{\includegraphics[width=\columnwidth]{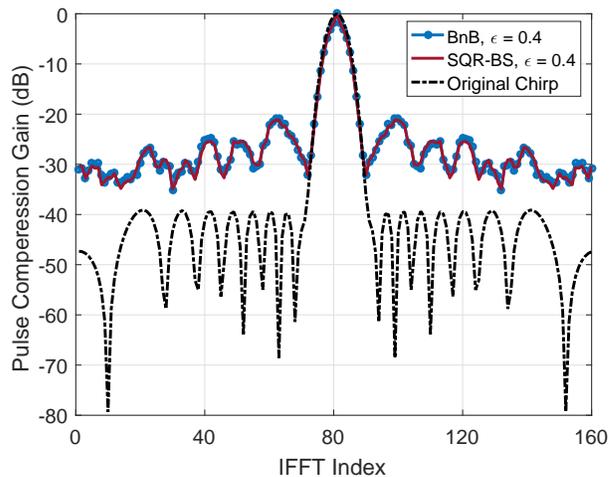}
\label{fig.11}}
\vspace{0.1in}
\hspace{.1in}
\subfloat[]{\includegraphics[width=\columnwidth]{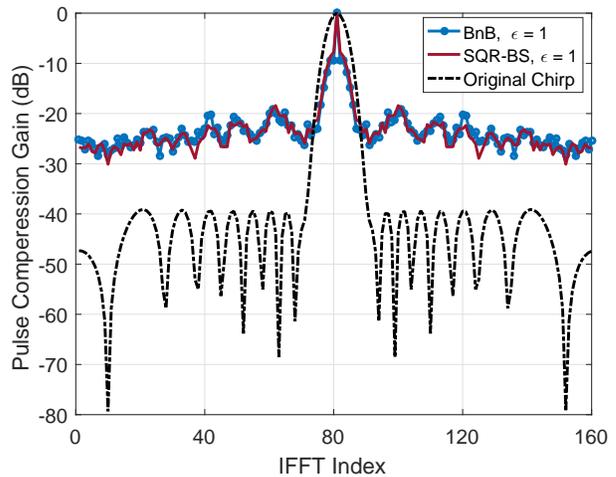}
\label{fig.12}}
\caption{Radar pulse compression for different similarity tolerance, $N = 16, K = 4$. (a) $\varepsilon = 0.05$;  (b) $\varepsilon = 0.4$; (c) $\varepsilon = 1$.}
\label{fig_radar_pc}
\end{figure}
\\\indent In Fig. 8 and 9, we show the trade-off between communication sum-rate and radar waveform similarity for the constant modulus designs, where we employ the SQR-Binary Search (SQR-BS) algorithm proposed by \cite{7450660} as our benchmark technique. Fig. 8 demonstrates the communication sum-rate with increasing $\varepsilon$ for $N = 16, K = 4, \operatorname{SNR} = 10\text{dB}$. As expected, the proposed BnB algorithm outperforms the SQR-BS significantly, since the result obtained by BnB is the global optimum, while SQR-BS can only yield local minimum solutions. It is worth highlighting that the performance of BnB is very close to the convex relaxation bound, which is obtained by solving QP-LB. When the similarity tolerance $\varepsilon$ is big enough, our BnB algorithm can approximate the AWGN capacity, i.e., the MUI can be fully eliminated.
\\\indent Fig. 9 shows the results of radar pulse compression with different similarity tolerance $\varepsilon$, where we use the waveform transmitted by the first antenna, and employ the classic FFT-IFFT pulse compression method \cite{richards2010principles} with a Taylor window to reduce the power of sidelobes. From Fig. 9 we see that there exists a trade-off between the communication sum-rate and radar pulse compression performance. Moreover, the pulse compression results of BnB and SQR-BS are nearly the same, as their performance is guaranteed by the same waveform similarity constraint, which again proves the superiority of the proposed BnB Algorithm.
\section{Conclusion}
In this paper, we discuss the waveform design for dual-functional radar-communication system, which can be used for both target detection and downlink communications. First of all, two design approaches are proposed to minimize the multi-user interference while formulating an appropriate radar beampattern, which have been further extended as a weighted optimization to achieve a flexible trade-off between radar and communications. It has been proven that the computational costs for the above three approaches are all at the same level with the communication-only ZF precoding. Numerical results show that all the proposed methods outperform the ZF precoding, while guaranteeing both the radar and communication performance. Moreover, our trade-off design can significantly improve the communication performance by allowing a slight performance loss at radar. Finally, we consider the non-convex constant modulus waveform design with similarity constraints, where an efficient global optimization algorithm based on the branch-and-bound framework has been developed. Gradient projection algorithms are used to efficiently obtain the upper and lower bounds. Simulations show that the proposed BnB algorithm for constant modulus waveform design with similarity constraints considerably outperforms the conventional SQR-BS algorithm by obtaining the global optimum of the problem.

% if have a single appendix:
%\appendix[Proof of the Zonklar Equations]
% or
%\appendix  % for no appendix heading
% do not use \section anymore after \appendix, only \section*
% is possibly needed

% use appendices with more than one appendix
% then use \section to start each appendix
% you must declare a \section before using any
% \subsection or using \label (\appendices by itself
% starts a section numbered zero.)
%

\appendix[Derivation of the Projector ${\operatorname{PR}}_2$]
\begin{figure}
\centering
\subfloat[]{\includegraphics[width=0.9\columnwidth]{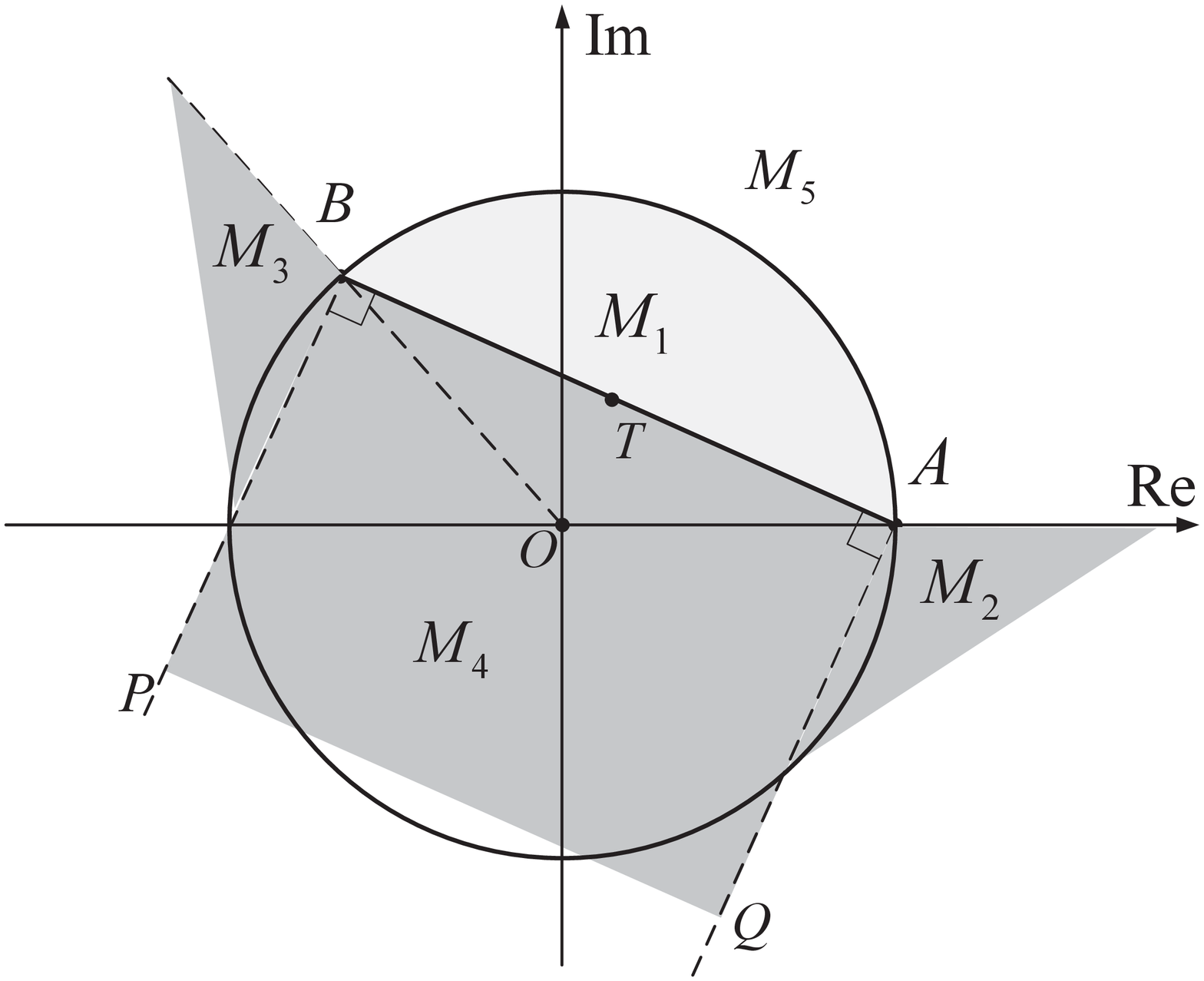}
\label{fig.4}}
\vspace{0.1in}
\hspace{.1in}
\subfloat[]{\includegraphics[width=0.85\columnwidth]{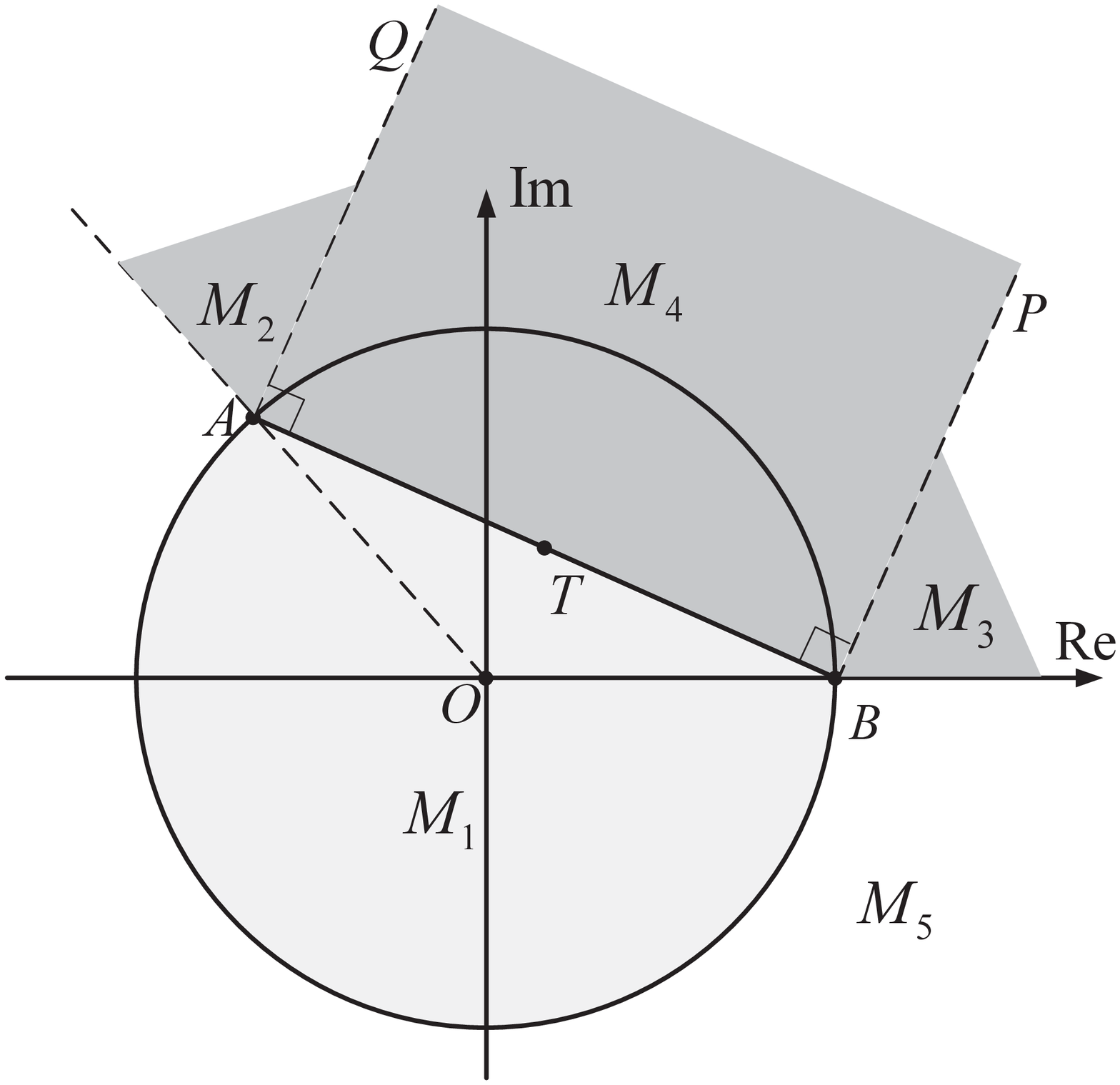}
\label{fig.5}}
\caption{Projector for GP. (a) $\phi \le \pi$; (b) $\phi \ge \pi$.}
\label{fig_projector}
\end{figure}
The projector are derived for two cases respectively, i.e., the open angle of the circular segment is (a) less than $\pi$ or (b) greater than $\pi$. We start from the first case. As shown in Fig. 10 (a), the whole complex plane $\mathbb{C}$ has been divided into five parts. The lower and the upper bounds for the angle are $l$ and $u$ respectively. Let us define
\begin{equation}
    A = \exp\left(jl\right), B = \exp\left(ju\right), T = \left(A+B\right)/2,
\end{equation}
where $T$  is the midpoint of $AB$. Given $X\in\mathbb{C}$, we aim to find a nearest point ${\operatorname{PR}}_2\left(X\right) \in M_1$ as the projection. Note that $\forall X \in M_1$, the projection is itself. For $X\in M_2$ and $X\in M_3$, the nearest points are $A$ and $B$ respectively. For $X \in M_4$, we project it onto the line $AB$, and the projection is the foot of perpendicular. For $\forall X \in {M_5} = \mathbb{C}\backslash \bigcup\limits_{i = 1}^4 {{M_i}}$, we use the normalization as its projection. By basic plane analytic geometry, we define the following lines
\begin{equation}
\begin{gathered}
  {\text{Line}}\;AB:\;\;{f_1}\left( X \right) = \operatorname{Re} \left( {{T^*}\left( {X - T} \right)} \right) = 0, \hfill \\
  {\text{Line}}\;OA:\;\;{f_2}\left( X \right) =  - \operatorname{Re} \left( {j{A^*}X} \right) = 0, \hfill \\
  {\text{Line}}\;OB:\;\;{f_3}\left( X \right) = \operatorname{Re} \left( {j{B^*}X} \right) = 0, \hfill \\
  {\text{Line}}\;AQ:\;\;{f_4}\left( X \right) = \operatorname{Re} \left( {{{\left( {A - B} \right)}^*}\left( {X - A} \right)} \right) = 0, \hfill \\
  {\text{Line}}\;BP:\;\;{f_5}\left( X \right) = \operatorname{Re} \left( {{{\left( {B - A} \right)}^*}\left( {X - B} \right)} \right) = 0. \hfill \\
\end{gathered}
\end{equation}
The projector is then given as
\begin{equation}
{\operatorname{PR}_2}\left( X \right) = \left\{ \begin{gathered}
  X,\;{f_1}\left( X \right) \ge  0,\;\left| X \right| \le  1\;\left( {X \in {M_1}} \right),\;\; \hfill \\
  A,\;\;{f_2}\left( X \right) \le  0 \le  {f_4}\left( X \right)\;\left( {X \in {M_2}} \right),\; \hfill \\
  B,\;\;{f_3}\left( X \right) \le  0 \le  {f_4}\left( X \right)\left( {X \in {M_3}} \right),\; \hfill \\
  {X_T},{f_1}\left( X \right),{f_4}\left( X \right),{f_5}\left( X \right) \le  0\;\left( {X \in {M_4}} \right),\; \hfill \\
  X/\left| X \right|,\;\text{else}, \hfill \\
\end{gathered}  \right.
\end{equation}
where $X_T$ is the foot of perpendicular on $AB$, i.e., $X{X_T}\bot AB, {X_T}\in AB $. This is given by
\begin{equation}
    X_T = X - \operatorname{Re} \left( {{{\left( {X - T} \right)}^*}T} \right)\frac{T}{{\left| T \right|}}.
\end{equation}
For the case of $\phi \ge \pi$ the projector is the same. The only difference is that $f_1\left(X\right)$ should be defined as
\begin{equation}
    {f_1}\left( X \right) = -\operatorname{Re} \left( {{T^*}\left( {X - T} \right)} \right).
\end{equation}

% Appendix one text goes here.

% % you can choose not to have a title for an appendix
% % if you want by leaving the argument blank
% \section{}
% Appendix two text goes here.

% use section* for acknowledgment
% \section*{Acknowledgment}
% This research was supported by the Engineering and Physical Sciences Research Council (EPSRC) project EP/M014150/1 and the China Scholarship Council (CSC) under Grant No. 201606030054.

% Can use something like this to put references on a page
% by themselves when using endfloat and the captionsoff option.
\ifCLASSOPTIONcaptionsoff
  \newpage
\fi

% trigger a \newpage just before the given reference
% number - used to balance the columns on the last page
% adjust value as needed - may need to be readjusted if
% the document is modified later
%\IEEEtriggeratref{8}
% The "triggered" command can be changed if desired:
%\IEEEtriggercmd{\enlargethispage{-5in}}

% references section

% can use a bibliography generated by BibTeX as a .bbl file
% BibTeX documentation can be easily obtained at:
% http://mirror.ctan.org/biblio/bibtex/contrib/doc/
% The IEEEtran BibTeX style support page is at:
% http://www.michaelshell.org/tex/ieeetran/bibtex/
\bibliographystyle{IEEEtran}
% argument is your BibTeX string definitions and bibliography database(s)
\bibliography{IEEEabrv,CEP_REF}
\end{document}